\def\firstcircle{(150:1.75cm) circle (2.5cm)}
\def\secondcircle{(30:1.75cm) circle (2.5cm)}
\def\thirdcircle{(270:1.75cm) circle (2.5cm)}
\def\thirdcircleup{(270:1.35cm) circle (2.5cm)}
\newcommand{\C}{{\mathrm C}} 
\newcommand{\Inf}{{\mathrm I}} 
\DeclareMathOperator{\lep}{\le^{\lg}}
\DeclareMathOperator{\gep}{\ge^{\lg}}
\DeclareMathOperator\eqp{\mathrel{\stackrel{\mbox{\normalfont\tiny $\lg $}}{=}}}
\DeclareMathOperator\eqdelta{\mathrel{\stackrel{\mbox{\normalfont\tiny $\delta $}}{=}}}
\providecommand{\keywords}[1]{\textbf{\textit{Keywords:}} #1}
\theoremstyle{plain}
\newtheorem{theorem}{Theorem}
\newtheorem{lemma}{Lemma}
\newtheorem{proposition}{Proposition}
\newtheorem{corollary}{Corollary}
\theoremstyle{definition}
\newtheorem{definition}{Definition}
\theoremstyle{remark}
\newtheorem{example}{Example}
\newtheorem{remark}{Remark}
\title{Spectral approach to the communication complexity of multi-party key agreement}
\author{Geoffroy Caillat-Grenier and Andrei Romashchenko}
\begin{document}

\maketitle

\begin{abstract}
We propose a linear algebraic method, rooted in the spectral properties of graphs, that can be used to prove lower bounds in communication complexity.
Our proof technique  effectively marries  spectral bounds with information-theoretic inequalities. 
The key insight is the observation that, in specific settings,  even when data sets $X$ and $Y$ are closely correlated and have high mutual information, 
the owner of $X$ cannot convey a reasonably short message that maintains substantial mutual information with $Y$.
In essence, from the perspective of the owner of $Y$, any sufficiently brief message $m=m(X)$ would appear nearly indistinguishable from a random bit sequence.

We employ this argument in several problems of communication complexity.
Our main result concerns cryptographic protocols.
We establish  a lower bound for communication complexity of multi-party secret key agreement with unconditional, i.e., information-theoretic security. 
Specifically, for  one-round  protocols (simultaneous messages model) of secret key agreement with three participants we obtain an asymptotically  tight lower bound.
This bound implies optimality of the previously known \emph{omniscience} communication protocol
(this result applies to a non-interactive secret key agreement with three parties and  input data sets with an arbitrary symmetric information profile).

We consider communication problems in one-shot scenarios when the parties' inputs are not produced by any i.i.d. sources, and there are no ergodicity assumptions on the input data. 
In this setting, we found it natural to present our results using the framework of Kolmogorov complexity.
\end{abstract}

\keywords{communication complexity,
 Kolmogorov complexity,
 information-theoretic cryptography,
multiparty secret key agreement,
expander mixing lemma,
information inequalities}

\section{Introduction}
\label{sec:intro}

Within computer science, a broad range of communication complexity problems has been studied in recent decades.
In these problems several  (two or more) agents  solve together some task (compute a function,  search an elements in a set, sample a distribution, and so on)
when the input data are distributed among the agents. In different context we may impose different constraints on the class of admissible protocols (protocols can be deterministic or randomized, one-way or interactive, 
with a one shot of  simultaneous messages or with several rounds, etc.). The cost of a communication protocol is the total number of bits that must be exchanged between participants, typically in the worst-case situation.

In this paper we focus on communication problems with three parties (Alice, Bob, and Charlie),  though our techniques can be extended to bigger number of participants.   
We deal with the situation when the input data accessible to Alice, Bob, and Charlie are correlated. In a popular model \emph{number-on-forehead}, the datasets given to Alice, Bob, and Charlie have large intersections, which is a very particular form of correlation between the data.
We study a more general setting (more usual in cryptography and information theory) where the input data sets given to the parties have large mutual information, but it might be impossible to materialize this mutual information as common chunks of bits shared by several parties.

The principal communication problem under consideration is \emph{secret key agreement}: Alice, Bob, and Charlie use the correlation between their input data sets to produce a common secret key. A special feature of this setting is the implicit presence of another participant in the game, Eve (eavesdropper/adversary). 
The eavesdropper can intercept all messages between Alice, Bob, and Charlie, but this should not give Eve any information about the final result of the protocol --- the produced secret key.
A secret key agreement (for two or many participants) is one of the basic primitives in cryptography;  it can serve as a part of more sophisticated protocols (the produced secret key can be used in  a one-time pad encryption  or in 
 more complicated cryptographic schemes).

In practice, the most standard and well known method of secret key agreement is the Diffie-Hellman key exchange \cite{diffie-hellman,merkle} and its generalizations, see \cite{diffie-hellman-generalisations}.
The security of this protocol is based on assumptions of computational complexity.
In particular, the Diffie-Hellman scheme is secure only if the eavesdropper cannot solve efficiently the problem of discrete logarithms. 
Such an assumption looks plausible for most practical applications.
However, theoretical cryptography studies also secret key agreement in information-theoretic settings, where we impose no restrictions on the computational power of the
eavesdropper.  
Besides a natural theoretical interest, such a scheme  can be useful as a building block in more complex protocols.
In particular,  a protocol of information-theoretic secret key agreement  (pretty conventional, involving  communication and computational tools conceivable in the framework of the classical physics)   
is an indispensable  component  of the  protocol of  quantum key distribution (\cite{quantum0,quantum1,quantum2}). 
\begin{example}\label{ex:quantum}
Let us recall that the standard protocols of quantum key distribution (see, e.g., \cite{quantum0})
can be subdivided into two phases: in the first one,  two parties use a quantum communication channel and  quantum measurements  to produce on both ends a pair of preliminary results that look like two 
strongly correlated but not identical sequences of random bits;  in the second phase, the parties use a classical communication channel  and purely classical computations to perform some sanity check 
and  make sure that the quantum communication was not compromised, and 
then extract a common secret key from the pair of correlated sequences of bits produced in the quantum phase. 
The last part of this scheme is exactly  an  information-theoretic secret key agreement  used in the setting when the two parties are preliminary given  a pair of highly correlated inputs. 
The size of the shared secret key   produced in the classical phase of the protocol depends on the rate of  correlation between the sequences generated by the parties in the quantum phase.
\end{example}

Besides quantum cryptography,  secret-key agreement based on correlated information appears in various cryptographic  schemes connected with  noisy data 
(biometric information, observations of an inherently noisy communication channel or other physical phenomenon, see  the discussions in \cite{biometric-survey,biometric}),  
in the bounded-storage model 
(\cite{bounded-storage-model-2,bounded-storage-model-3}), and so on. 
We refer the reader to the survey  \cite{information-theoretic-security} for a more detailed discussion.  

In  the Diffie-Hellman scheme, the parties may start the protocol from zero, holding  initially  no secret information.  
In contrast, a secret key agreement with  information-theoretic secrecy is impossible 
if the parties  start from scratch. 
To produce a key that is secret in information-theoretic sense, the participants of the protocol  need to be given some input data (inaccessible to the eavesdropper). 
The pieces of input data provided to the parties must be  correlated with each other, and the measure of this correlation determines the optimal size of the common secret key that can be produced.

So far we were very informal and did not specify the mathematical definitions behind the words \emph{secrecy} (of the key) and \emph{correlation} (between parties' inputs).   
Let us describe the settings of information-theoretic secret key agreement more precisely. 
This can be done in different mathematical frameworks. 
  
Historically, information-theoretically secure protocols of secret key agreement were introduced in classical information theory, \cite{ahslwede-csiszar,maurer}.
In this setting, the input data of the parties are produced by correlated random variables. In the settings with two parties it is usually assumed that 
there is a sequence of i.i.d. pairs of random variables with finite range, $(X_i,Y_i)$, $i=1,\ldots,n$, and Alice and Bob receive the values of $(X_1\ldots X_n)$
and $(Y_1\ldots Y_n)$ respectively,
\[
\begin{array}{lcl}
\text{Alice} & \leftarrow & ({ X_1 X_2 \ldots X_n}),\\
\text{Bob}  & \leftarrow  & ({\,Y_1\, Y_2\, \ldots\, Y_n\,}).
\end{array}
\]
Then Alice and Bob run a communication protocol 
and try to produce a common value (secret key) $W$ asymptotically independent of the \emph{transcript} (the transcript consist of  the messages sent by Alice and Bob to each other). 
Ahlswede--Csiszar \cite{ahslwede-csiszar} and Maurer \cite{maurer} found a characterization of the optimal size of $W$ in terms of Shannon's entropy of the input data. 
They showed that the optimal size of the secret key is asymptotically  equal to the mutual information between Alice's and Bob's inputs. A similar characterization of the optimal secret key is known for multi-party protocols, with $k\ge 3$ parties, \cite{ multiparty}.
The problem of secret key agreement and a related problem of \emph{common randomness generation} were extensively studied  in the information theory community  and also  
(in somewhat different settings) in theoretical computer science, see, e.g., \cite{sudan1,sudan2} and the survey \cite{sudan-survey}.

\smallskip

In this paper we follow  the paradigm of  building the foundations of cryptography  in the framework of algorithmic information theory, as suggested in a general form in \cite{antunes}
and more specifically for secret key agreement in  \cite{jacm2019,gr2020}. 
In this approach, the information-theoretic characteristics of the data are  defined not in terms of Shannon's entropy but in terms of Kolmogorov complexity.
In this setting, we can talk about properties of \emph{individual} inputs, keys, transcripts, and not about \emph{probability distributions}.
We assume that the parties (Alice, Bob, Charlie) are given as inputs binary strings $x$, $y$, $z$ respectively,
\[
\begin{array}{lcl}
\text{Alice} & \leftarrow& x,\\
\text{Bob} & \leftarrow&y, \\
\text{Charlie} & \leftarrow&z,
\end{array}
\]
and that the parties know the complexity profile of these strings, i.e., the optimal compression rate of these inputs (precisely or at least approximately, see below).
The secrecy of the produced key means that this key must be incompressible, even conditional on  the public data including the  transcript of the communication protocol.
In other words,  the mutual information (in the sense of Kolmogorov complexity) between the key and the messages sent via the communication channel 
(the transcript) must be negligibly small. 
Practically, this property guarantees  that  the adversary can crack an encryption scheme based on this key  only by the brute-force search,  see the discussion in \cite{gr2020}.
\begin{remark}
The approach based on Kolmogorov complexity seems  more general since we do not need to assume that inputs have any property of stationarity or ergodicity, we do not fix in advance the probability distribution of the pairs of inputs, we do not even assume the existence of such a distribution.  
However, the frameworks of Shannon and Kolmogorov for the definition of secrecy have similar practical interpretations. 
Indeed, a distribution $W$ on $\{0,1\}^n$ has a high entropy, i.e.,  $H(W)\approx n$, if and only if
with a high probability  $W$ returns an $n$-bit string with Kolmogorov complexity  close to $n$  (a random source with a high entropy typically produces incompressible values).
For a more detailed discussion of the connection between Shannon's and Kolmogorov's formalism see~\cite{gruenwald-vitanyi}. 
The  formal statements in Kolmogorov's framework are usually stronger than their homologues in Shannon's framework, 
and theorems from the former theory in most cases formally imply the corresponding results from the latter theory, see \cite{jacm2019}.
\end{remark}

A characterization of the optimal size of the secret key in term  of  Kolmogorov complexity was suggested in \cite{jacm2019}. We begin with the case of two parties, see Theorem~\ref{th:algorithmic} below.
In this theorem, a communication protocol is randomized  (we assume that the parties may use a public source of random bits, which is also accessible to the eavesdropper). 
Let $x$ and $y$ stand for inputs of Alice and Bob, $r$ denote the string of bits  produced by a  public source of randomness (used by the parties and accessible to the eavesdropper), and $t$ denote the transcript of the protocol. 
\begin{theorem}[\cite{jacm2019}]\label{th:algorithmic}
(i) For any numbers $k,\ell\in \mathbb{N}$ and $\epsilon,\delta >0$ there exist a randomized communication protocols $\pi_{k,\ell,\epsilon,\delta}$ such that on every pair of input strings $ (x,y) $ 
(of length at most $n$)
satisfying\footnote{Here the term $ \C(x)$ stands for the plain Kolmogorov complexity  of $x$  (optimal compression of $x$),  
the term $ \C(x \mid y)$ stands for  conditional Kolmogorov complexity of $x$  conditional on $y$ (optimal compression of $x$ given advice $y$), and  the notation 
 $ \C(x) \eqdelta k \text{ and } \C(x\mid y)  \eqdelta  \ell$  means that $|\C(x) - k| \le \delta $ and  $|\C(x\mid y) - \ell | \le \delta $.}  
$
\C(x) \eqdelta k \text{ and } \C(x\mid y)  \eqdelta \ell,
$
Alice and Bob with probability $1-\epsilon$ both obtain a result 
$w = w( x,y,r)$  such that
\begin{equation}\label{eq:rz-weak}
[\text{length of } w  \text{ in bits}]=  \C(x) - \C(x\mid y) - O(\delta )  - o(n) \text{ and }
\C(w \mid \langle t,r\rangle) \ge  |w| - o(n)
\end{equation}
(for $n = |x|+|y|$),
which means that the size of the produced secret key is asymptotically  equal to the mutual information between Alice's and Bob's inputs, and the leakage of information on the key to the eavesdropper (who can access the transcript of the protocol $t$ and public randomness $r$) is negligibly small.

(ii) The size of the key in (i) is pretty much optimal:
 no communication protocol can produce a key $w$ longer than $\C(x) - \C(x\mid y) + O(\delta)  + o(n)$ without loosing the property of secrecy 
 $
 \C(w \mid \langle t,r\rangle) \ge [\text{length of } w  \text{ in bits}]  - o(n)
 $
(the size of a secret key cannot be made asymptotically greater than the mutual information between Alice's and Bob's inputs).
\end{theorem}
\begin{remark}
In Theorem~\ref{th:algorithmic},  the values of $k$ and $\ell$ are embedded in the communication protocol $\pi_{k,\ell,\epsilon,\delta}$.
This means that the parties in some sense ``know'' (at least approximately) the values of $\C(x)$  and $ \C(x\mid y)$.
This is similar to the settings of the classical information theory, where the parties ``know'' the probability distribution on random  inputs and can use a suitable protocol.
The theorem is  nontrivial if the approximation rate $\delta =o(n) $ as $n\to \infty$.
\end{remark}
\begin{remark}
The precision in Eq.~(\ref{eq:rz-weak}) in Theorem~\ref{th:algorithmic} can be made tighter:  there exists  a communication protocol which guarantees
\begin{equation}\label{eq:rz-strong}
[\text{length of } w  \text{ in bits}]=  \C(x) - \C(x\mid y) -  O(\delta) - O(\log n) \text{ and }
\C(w \mid \langle t,r\rangle) \ge  |w| - O(1).
\end{equation}
\end{remark}

Theorem~\ref{th:algorithmic} can be extended to the multi-party setting, where $k>2$ parties are given correlated data and need to agree on common secret key
communicating via a public channel. 
Let us discuss in more detail the version with $k=3$ participants. We assume now that three parties (Alice, Bob, and Charlie) are involved in the protocol.
They are given inputs $x,y,z$ respectively.
We assume that all parties have an access to a common source of  random bits (we denote by $r$ the bits produced by this source)
and exchange messages via a public channel
(we use the conventional definition of a multi-party communication protocol with a public source of random bits, see  \cite{kushilevitz-nisan}).
It is assumed that every message sent by any party reaches every other party (and the eavesdropper). 
In what follows we consider only  triples of inputs $(x,y,z)$ with a ``symmetric'' complexity profile  such that
$\C(x) \approx \C(y) \approx \C(z)$ and $\C(x,y) \approx \C(x,z) \approx \C(y,z)$.
\begin{theorem}[symmetric version of  {\cite[Theorem~5.11]{jacm2019}}]
\label{th:algorithmic-3}
(i) For any profile $(k_1,  k_{2}, k_{3}) \in \mathbb{N}^3$ and $\epsilon,\delta >0$ there exist a randomized communication protocols $\pi_{k_1,k_2,k_3,\epsilon,\delta}$ 
for three parties such that on every triple of binary input strings $ (x,y,z) $   (of length at most $n$)
satisfying
\begin{equation}\label{eq:sym-profile}
\begin{array}{l}
\C(x)  \eqdelta
\C(y)  \eqdelta  
\C(z) \eqdelta k_1,\
\C(x,y)  \eqdelta  
\C(x,z)  \eqdelta 
\C(y,z)  \eqdelta k_2,\ 
\C(x,y,z) \eqdelta  k_{3}
\end{array}
\end{equation}
 Alice, Bob, and Charlie can agree with probability $1-\epsilon$ on a key
$w = w( x,y,z,r)$ such that
\begin{equation}\label{eq:rz-triple}
[\text{length of } w  \text{ in bits}] = 
\begin{array}{l}
\frac{I(x:y\mid z) +  I(x:z\mid y) + I(y:z\mid z)}2 + I(x:y:z)
\end{array}
 - O(\delta ) -  o(n)
\end{equation}
 (for $n=|x|+|y|+|z|$) and 
\begin{equation}\label{eq:secrecy}
 \C(w \mid \langle t,r\rangle) \ge  |w| - o(n).
\end{equation}

(ii) The size of the key in (i) is  asymptotically optimal, i.e., no communication protocol can give a key $w$
asymptotically longer than 
\begin{equation}\label{eq:key-size}
\frac12\left(I(x:y\mid z) +  I(x:z\mid y) + I(y:z\mid z)  \right) + I(x:y:z) +  O(\delta) + o(n)
\end{equation}
without loosing the property of secrecy  \eqref{eq:secrecy}.
\end{theorem}
\begin{remark}
The general version of \cite[Theorem~5.11]{jacm2019} applies to a triple of inputs with arbitrary (possibly non-symmetric) complexity profile.
In the general case, the characterization of the optimal size of the secret key is more involved then \eqref{eq:rz-triple}
and involves piece-wise linear expression involving the terms of the mutual information for $x$, $y$, and $z$, see  \cite{jacm2019}.
We discuss only  symmetric complexity profiles in order to avoid cumbersome formulas and focus on the most essential combinatorial ideas behind the proofs.
\end{remark}

The known proofs of the positive parts of Theorem~\ref{th:algorithmic} and Theorem~\ref{th:algorithmic-3} (the existence of protocols) are quite explicit and constructive: we know specific communication protocols that 
allow to produce a secret key of the optimal size. 
More specifically,  the proofs suggested in \cite{jacm2019} provide a protocol for  Theorem~\ref{th:algorithmic}(i) with communication complexity 
\begin{equation}\label{eq-cc-2}
\min \left\{  \C(x \mid y), \C(y \mid x)\right\} +  O(\delta )  + O(\log n)
\end{equation}
and  a protocol\footnote{%
The scheme proposed in \cite{jacm2019} is the so called \emph{omniscience} protocol.
In this protocol, all parties send simultaneously  their messages (random hash-values of the inputs) so that each of them learns completely the entire triple of inputs $(x,y,z)$
(this explains the term \emph{omniscience}).
The total length of the sent messages is less than $\C(x,y,z)$, so an eavesdropper can learn only a partial information on the inputs. 
The gap between the total complexity of $\C(x,y,z)$ and the divulged information is used to produce a secret key.%
}
 for  Theorem~\ref{th:algorithmic-3}(i) with communication complexity
\begin{equation}\label{eq-cc-3}
\C(x,y,z) - \frac12\big(I(x:y\mid z) +  I(x:z\mid y) + I(y:z\mid z)  \big)  - I(x:y:z) +  O(\delta )  + O(\log n).
\end{equation}
The communication complexity \eqref{eq-cc-2}  from Theorem~\ref{th:algorithmic}(i) is known 
to by asymptotically optimal, see \cite{gr2020}.
In this paper we study the communication complexity of the problem from Theorem~\ref{th:algorithmic-3}.
In fact, \eqref{eq-cc-3} is \emph{not} optimal for general communication protocols; however, we show that this communication complexity is asymptotically  optimal
in the class of protocols with \emph{simultaneous messages}, i.e., in the model where Alice, Bob, and Charlie send their messages in parallel,  receive the messages sent by their vis-a-vis,
and compute the result (secret key) without any further interaction. 

\begin{theorem}[main result]\label{th:main}
In the setting of Theorem~\ref{th:algorithmic-3}, communication complexity of a protocol with simultaneous messages
(the total number of bits sent by Alice, Bob, and Charlie) for triples of inputs $(x,y,z)$ with a symmetric  complexity profile~\eqref{eq:sym-profile})
cannot be smaller than
\begin{equation}\label{eq-cc-3-lower-bound}
\C(x,y,z) - \frac12\big(I(x:y\mid z) +  I(x:z\mid y) + I(y:z\mid z)  \big)  - I(x:y:z) -  O(\delta)  - O(\log n).
\end{equation}
\end{theorem}
Communication complexity  \eqref{eq-cc-3-lower-bound} is not optimal for general (multi-round) communication protocols of secret key agreement,  see Proposition~\ref{thm:5}.

\smallskip

The proof of our main result combines information-theoretic techniques and spectral bounds for graphs (the expander mixing lemma). 
Spectral bounds \emph{per se} are not new in communication complexity (see, e.g., the usage of Lindsey's lemma in \cite{lindsey-lemma}).
Information-theoretic methods are also  pretty common in this area. 
But the combination of these two techniques seems to be less standard.
The key step  of the proof is the observation  that in some setting, when parties hold correlated data sets, for each of them it is hard to send a message that has non-negligible mutual information with the partners' data. 
In other words,  a ``too short'' message sent by Alice would have zero mutual information with the data $(y,z)$ given to Bob and Charlie. 
For  secret key agreement protocols, this observation implies that the messages of every party inevitably have to be quite long.
A similar argument can be used in problems that are not connected with cryptography, see Theorem~\ref{th1}.

\smallskip

The rest of the paper is organized as follows. In Section~\ref{sec:preliminaries} we recall several standard definitions and introduce the notation.
In Section~\ref{sec:informal-proof} we explain informally the scheme of our argument. 
In Section~\ref{sec:1} we prove the main technical tool of this paper, Theorem~\ref{thm:1} (which claims that in some setting, it is hard to send a message that has non-negligible mutual information with the partners' data).
In Section~\ref{sec:parallel-messages} we illustrate the application of our technique with a simple example that is not related to cryptography.
In Section~\ref{sec:crypto-special-case} we prove Theorem~\ref{th:main}  for a restricted (``the most important'') class of complexity profiles; this is the main technical contribution of the paper.
In Section~\ref{sec:crypto-general-case} we extend this result  and prove Theorem~\ref{th:main} for all (symmetric) complexity profiles.
We conclude  with a discussion of limitations of our technique and open problems.
Several technical lemmas are deferred to Appendix.

\section{Preliminaries and Notation}\label{sec:preliminaries}

\subsection{General notation.}
For a binary string $x$ we denote its length $|x|$. 
For a finite set $S$ we denote its cardinality~$\# S$.

In what follows we manipulate with equalities and inequalities for Kolmogorov complexity. 
Since  many of them hold up to a logarithmic term, we use the notation
$A\eqp B$,  $A\lep B$,  and $A\gep B$
for 
$
|A-B| = O(\log n),\ A\le B + O(\log n), \text{ and } B\le A + O(\log n)
$
respectively, where $n$ is clear from the context ($n$ is usually the length of the strings involved in the inequality).

$\mathbb{F}_q$ denotes the field of $q$ elements (usually $q= 2^n$). 
A $k$-dimensional vector over $\mathbb{F}_q$ is a $k$-tuple  $(x_1,\ldots, x_k) \in \mathbb{F}_q^k$. 
We say that two vectors $(x_1,\ldots, x_k)$ and $(y_1,\ldots, y_k)$ in $ \mathbb{F}_q^k$ are orthogonal to each other if
$
x_1y_1 +\ldots + x_k y_k = 0
$
(the addition and multiplication are computed in the field $ \mathbb{F}_q$).
A vector is called self-orthogonal if it is orthogonal to itself. In a $k$-dimensional space over the field of characteristic $2$ there  are $2^{k-1}$ self-orthogonal vectors $(x_1,\ldots, x_k)$ and they form a linear subspace of co-dimension $1$ (a vector is self-orthogonal iff $x_1+\ldots+x_k = 0$).
A \emph{direction} in $\mathbb{F}_q^k$  is an equivalence class of non-zero vectors over $\mathbb{F}_q$ that are proportional to each other (a direction can be understood as a point in the projective space of dimension $k-1$).
Two directions are orthogonal to each other  if every vector in the first one is orthogonal to every vector in the second one.

$\C(x)$ stands for Kolmogorov complexity of $x$ (the length of the shortest program\footnote{In an optimal programming language, see Appendix for more detail.} producing~$x$) and $\C(x \mid y)$ (the length of the shortest program producing $x$ given input $y$) stands for Kolmogorov complexity of $x$ given $y$. 
Respectively, $\Inf(x:y)$ and $\Inf(x:y\mid z)$ denote the mutual information between $x$ and $y$ and the conditional information between $x$ and $y$ given $z$. We use the notation $\Inf(x:y:z) := \Inf(x:y) - \Inf(x:y\mid z)$.
For a tuple of strings $(x_1,\ldots, x_n)$ its \emph{complexity profile} is the vector consisting of the complexity values $\C(x_{i_1},\ldots, x_{i_s})$ (for all $2^n-1$ sub-tuples $1\le i_1<\ldots <i_s\le n$). 

Kolmogorov complexity can be relativized: $\C^{\cal O}(x)$ and $\C^{\cal O}(x\mid y)$ stand for Kolmogorov complexity of $x$ (conditional on $y$) assuming that the universal decompressor can access  oracle $\cal O$. If the oracle is a finite string $s$, then $\C^{\cal O}(x) = \C(x\mid s) + O(1)$. 

For more detail on the basic facts about Kolmogorov complexity see Appendix. A comprehensive  introduction in the theory of Kolmogorov complexity can be found in  \cite{li-vitanyi} and \cite{shen-vereshchagin}.

\subsection{Communication complexity.}
We use the conventional notion of a communication protocol for two or three parties, see for detailed definitions \cite{kushilevitz-nisan}.
We discuss \emph{deterministic protocols} and \emph{randomized protocols with a public source of random bits} (see Appendix for more detail).

In general, a communication protocol may consist of several rounds, when each next message of every party depends on the previously sent messages. 
In the \emph{simultaneously messages} model there is no interaction: all parties  send in parallel their messages  that depend only on their own input data (and the random bits), and then compute the final result.

We usually denote the inputs of Alice, Bob, and Charlie as $x$, $y$, and $z$ respectively (\emph{number-in-hand} model). 
A deterministic communication protocol for inputs  $x,y,z\in\{0,1\}^n$ returns a result $w=w(x,y,z)$. 
In a randomized protocol the result depends also on the public source of random bits $r$, and $w=w(x,y,z,r)$. 
The sequence of messages sent by the parties to each other while following the steps of the protocol is called a \emph{transcript} $t=t(x,y,z)$ of the communication
($t=t(x,y,z,r)$ for randomized protocols).   
Communication complexity of a protocol is the maximal length of its transcript (measured in bits), i.e., $\max\limits_{x,y,z,r} |t(x,y,z,r)|$. 

A communication protocol \emph{computing a function} $F(x,y,z)$ returns a correct result if $w(x,y,z,r) = F(x,y,z)$.
For a \emph{secret key agreement} protocol, the definition of a \emph{correct result} $w$ is subtler:  we need that  (i)~$w$ is of the required size and (ii)~it is almost incompressible even given the transcript of the communication $t$ and the public random bits $r$. 
For a more detailed discussion of this setting we refer the reader to \cite{jacm2019}.

We will assume that the communication protocol has a ``uniform'' description.
More technically,  we assume that for $n$-bit inputs (the full description of such a protocol) has an efficient description of size $O(\log n)$. 
For such a protocol we do not loose much security even if the description of the protocol is available to the eavesdropper.
Thus,  we cannot ``cheat''  by embedding in the structure of the protocol any secret information hidden from the adversary.

If the length of inputs is equal to $n$, we may assume w.l.o.g. that the  used string of public random bits $r$ is of length $O(n)$.
(Using longer sources of random bits may only slightly affect the probability of an erroneous result. 
For protocols computing a function, $O(\log n)$ bits is enough due to the Newman's theorem, \cite{newman}; 
in secret key agreement protocols we may need $O(n)$ random bits, see \cite{gr2020}). 
This is crucial for our setting: we may assume that the terms $O(\log \C(r))$  involved in inequalities for Kolmogorov complexity match in order of magnitude the terms $O(\log (\C(x) +\C(y)+\C(z)))$, where $x,y,z$ are the input data.

\subsection{Reminder of the spectral graph  technique.}
Let $G = (L\cup R, E)$ be a bi-regular bipartite  graph where each vertex in $L$ has degree $D_L$, each vertex in $R$ has degree $D_R$, 
and each edge $e\in E$ connects a vertex from $L$ with a vertex from $R$ (observe that $\# E = \#L \cdot D_L = \# R\cdot D_R$). 
The adjacency matrix of such a graph is a  zero-one matrix 
$
M = \left(
\begin{array}{cc}
0 & A\\
A^\top &0
\end{array}
\right)
$
where $A$ is a matrix of dimension $(\# L) \times (\# R)$ 
($A_{xy} =1$ if and only if there is an edge between the $x$-th vertex in $L$ and the $y$-th vertex in $R$). 
Let
$
\lambda_1\ge\lambda_2\ge\ldots\ge\lambda_{N}
$
be the eigenvalues of $M$, where $N=\#L + \#R$ is the total number of vertices. 
Since $M$ is symmetric, all $\lambda_i$ are real numbers. 
It is well known that for a bipartite graph
the spectrum is symmetric, i.e., $\lambda_i= - \lambda_{N-i+1}$ for each $i$, and $\lambda_1=-\lambda_{N} =\sqrt{D_L D_R}$ (see, e.g., \cite{mixing-lemma}). 
The graphs with a large  \emph{spectral gap} (the gap between the first and the second eigenvalues) have the property of good \emph{mixing}, see \cite{graph-spectrum}.
\begin{lemma}[Expander Mixing Lemma for bipartite graphs, see  \cite{mixing-lemma}]  \label{mixing-lemma}
	Let $G=(L\cup R,E)$ be a regular bipartite graph where each vertex in $L$ has degree $D_L$ and  each vertex in $R$ has degree $D_R$. 
	Then for each $A\subseteq L$ and $B\subseteq R$ we have
	$
	\left|E(A,B)-\frac{D_L\cdot \# A \cdot \#B }{\# R} \right|\leq \lambda_2 \sqrt{\# A\cdot \# B},
	$
	where $\lambda_2$ is the second largest 
	eigenvalue of the adjacency matrix of $G$ and $E(A,B)$ is the number of edges between $A$ and $B$.
\end{lemma}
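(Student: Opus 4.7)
The plan is to reduce the statement to a computation with the singular value decomposition of the off-diagonal block $J$ of $H$, by writing the edge count as a bilinear form $E(A,B) = \mathbf{1}_A^\top J\, \mathbf{1}_B$, where $\mathbf{1}_A \in \{0,1\}^L$ and $\mathbf{1}_B \in \{0,1\}^R$ are the indicator vectors. The symmetry of the block structure of $H$ pairs the nonzero eigenvalues of $H$ as $\pm \sigma_i$, where $\sigma_1 \ge \sigma_2 \ge \ldots$ are the singular values of $J$; in particular $\sigma_1 = \lambda_1 = \sqrt{D_L D_R}$ and $\sigma_2 \le \lambda_2$.

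First I would identify the top left and right singular vectors of $J$ as the normalized all-ones vectors on each side, $u_1 = \frac{1}{\sqrt{\#L}} \mathbf{1}_L$ and $v_1 = \frac{1}{\sqrt{\#R}} \mathbf{1}_R$. The crucial check is that $J v_1 = \frac{D_L}{\sqrt{\#R}} \mathbf{1}_L$ coincides with $\sigma_1 u_1$, which reduces to the double-counting identity $D_L \cdot \#L = D_R \cdot \#R = \#E$. Writing the full SVD $J = \sum_i \sigma_i\, u_i v_i^\top$ with orthonormal families $\{u_i\}$, $\{v_i\}$, I would then expand
\[
E(A,B) = \mathbf{1}_A^\top J\, \mathbf{1}_B = \sum_{i \ge 1} \sigma_i\, \langle \mathbf{1}_A, u_i\rangle\, \langle \mathbf{1}_B, v_i\rangle.
\]
The $i=1$ term equals $\sqrt{D_L D_R} \cdot \frac{\#A}{\sqrt{\#L}} \cdot \frac{\#B}{\sqrt{\#R}}$, which, using $D_L \#L = D_R \#R$ once more, simplifies to exactly the main term $D_L \cdot \#A \cdot \#B / \#R$ appearing in the statement.

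The remaining sum $\sum_{i \ge 2} \sigma_i\, \langle \mathbf{1}_A, u_i\rangle \langle \mathbf{1}_B, v_i\rangle$ is the error, and I would bound it by pulling out $\sigma_2$ and applying Cauchy--Schwarz:
\[
\left| \sum_{i \ge 2} \sigma_i\, \langle \mathbf{1}_A, u_i\rangle \langle \mathbf{1}_B, v_i\rangle \right| \le \sigma_2 \sqrt{\sum_{i \ge 2} \langle \mathbf{1}_A, u_i\rangle^2} \cdot \sqrt{\sum_{i \ge 2} \langle \mathbf{1}_B, v_i\rangle^2} \le \lambda_2\, \|\mathbf{1}_A\| \cdot \|\mathbf{1}_B\| = \lambda_2 \sqrt{\#A \cdot \#B},
\]
where each truncated sum of squared inner products is bounded by the corresponding full squared norm, $\#A$ and $\#B$. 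Adding the main term and the error bound yields the claimed inequality.

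The argument is essentially routine once this spectral framework is set up; the only points where I would take extra care are (i) the verification that the normalized uniform vectors really are the top singular vectors with singular value $\sqrt{D_L D_R}$, which depends on the bi-regularity condition $D_L \#L = D_R \#R$, and (ii) the passage from singular values of $J$ to eigenvalues of $H$, so that $\sigma_2 \le \lambda_2$ can legitimately be used in the last inequality. Everything else is a single application of the spectral theorem together with Cauchy--Schwarz.
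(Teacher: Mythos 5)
Your argument is correct. Note that the paper does not contain its own proof of this lemma: it only states it and defers to the cited reference \cite{mixing-lemma}, so there is no in-paper argument to compare against; your expansion of the bilinear form $\mathbf{1}_A^\top J\,\mathbf{1}_B$ along the singular value decomposition of $J$, isolating the all-ones singular pair and bounding the tail by $\sigma_2$ via Bessel and Cauchy--Schwarz, is exactly the standard route. The only step you flag but do not actually discharge is that the normalized all-ones vectors form the \emph{top} singular pair: you verify they are a singular pair with value $\sqrt{D_LD_R}$, but to bound the remaining terms by $\sigma_2$ you also need that $\sqrt{D_LD_R}$ is the largest singular value; this follows immediately from $\|J\|_{\mathrm{op}}\le\sqrt{\|J\|_{1}\,\|J\|_{\infty}}=\sqrt{D_R D_L}$ (the maximal column sum times the maximal row sum), so the gap is cosmetic and easily closed.
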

We apply  Lemma~\ref{mixing-lemma}  for the case $\frac{D_L\cdot \# A \cdot \#B }{\# R} \ge  \lambda_2 \sqrt{\# A\cdot \# B}$, as shown in the corollary below.
\begin{corollary}\label{mixing-lemma-large-sets}
Let $G=(L\cup R,E)$ be a graph from  Lemma~\ref{mixing-lemma} with the second eigenvalue $\lambda_2$.  
Then for $A\subseteq L$ and $B\subseteq R$ such that $\# A \cdot \# B \ge \left(\frac{\lambda_2 \#R}{D_L}\right)^2$ we have 
	\begin{equation}
		\label{eq:mixing-easy}
		E(A,B)  = O\left(  \frac{D_L\cdot \# A \cdot \# B}{\#R }  \right).
	\end{equation}
\end{corollary}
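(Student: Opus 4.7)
The plan is to apply Lemma~\ref{mixing-lemma} directly and then use the hypothesis on the sizes of $A$ and $B$ to absorb the error term into the main term.

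First I would write out the consequence of the Expander Mixing Lemma in the form
\[
E(A,B) \le \frac{D_L\cdot \# A \cdot \# B}{\# R} + \lambda_2 \sqrt{\# A \cdot \# B}.
\]
The goal is to show that, under the hypothesis $\# A \cdot \# B \ge \left(\frac{\lambda_2 \# R}{D_L}\right)^2$, the second (error) term is bounded by the first (expected) term up to a constant factor, which immediately yields the claimed $O$-estimate.

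Second, I would rewrite the error term using $\sqrt{\# A \cdot \# B} = \tfrac{\# A \cdot \# B}{\sqrt{\# A \cdot \# B}}$ and plug in the hypothesis $\sqrt{\# A \cdot \# B} \ge \frac{\lambda_2 \# R}{D_L}$. This gives
\[
\lambda_2 \sqrt{\# A \cdot \# B} \;=\; \lambda_2 \cdot \frac{\# A \cdot \# B}{\sqrt{\# A \cdot \# B}} \;\le\; \lambda_2 \cdot \frac{\# A \cdot \# B \cdot D_L}{\lambda_2 \# R} \;=\; \frac{D_L \cdot \# A \cdot \# B}{\# R}.
\]
Substituting back, we obtain $E(A,B) \le 2 \cdot \frac{D_L \cdot \# A \cdot \# B}{\# R}$, which is exactly the desired bound.

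There is no real obstacle here: the corollary is a direct algebraic rearrangement of the Expander Mixing Lemma under a regime where the sets $A$ and $B$ are large enough that the discrepancy term cannot dominate the expected edge count. The only thing worth noting is that the constant hidden in the $O(\cdot)$ is essentially $2$, and that the two-sided bound from Lemma~\ref{mixing-lemma} also gives a matching lower bound $E(A,B) \ge 0$ trivially (or even a nontrivial lower bound of the same order when the hypothesis is strict); for the applications in the paper only the upper bound is needed, so I would state and prove just the one-sided version.
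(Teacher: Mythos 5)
Your proof is correct and is exactly the intended argument: the paper states this corollary without proof as an immediate consequence of Lemma~\ref{mixing-lemma}, and your absorption of the discrepancy term $\lambda_2\sqrt{\#A\cdot\#B}$ into the main term under the size hypothesis is the standard (and only natural) way to obtain it, with constant $2$ in the $O(\cdot)$.
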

To apply the expander mixing lemma, we need a graph with a large spectral gap. 
In particular, we use the following well know lemma.
\begin{lemma}[see \cite{hoffman}]
\label{l:projective-plane}
Let $G = (L\cup R, E)$ be a graph where $L$ consists of all lines in a projective plane over a finite field $\mathbb{F}_q$,
$R$ consists of all points in this plane, and $E$ consists of all incident pairs $(\text{line}, \text{point})$. 
In this graph $\lambda_1 = \Theta(q)$ and $\lambda_2 = O(\sqrt{q})$.
\end{lemma}
In the next section we  discuss spectral properties of graphs with a more complicated structure.

\section{Main technical tools and the scheme of the proof}
\label{sec:informal-proof}

In this section we sketch the proof of our main result (Theorem~\ref{th:main}). 
In this sketch we ignore technical  difficulties that can be resolved with  standard techniques or \emph{ad hoc} tricks, and focus on the main ideas used in the proof.

\smallskip

\noindent
\textbf{3.1. Setting the parameters.} Let us assume that $\delta  = O(\log n)$, 
i.e., all parties of the protocol ``know''  the complexity profile of the triple of inputs $(x,y,z)$ up to an additive logarithmic term\footnote{%
A logarithmic error term is, in some sense, the finest meaningful precision for Kolmogorov complexity. 
All our arguments can be repeated \emph{mutatis mutandis} for any coarser precision $\delta$ such that $\log n \ll \delta(n) \ll  n$.}.
This assumption does not affects significantly the argument, but it helps to avoid minor technical details and makes the explanation more transparent.
To simplify the notation, in this section we discuss only triples of inputs with the profile
\begin{equation}\label{eq:main-profile}
\begin{array}{l}
\C (x) \eqp \C(y) \eqp \C(z) \eqp k n,\
\C(x,y) \eqp \C(x,z) \eqp \C(y,z) \eqp (2k-1) n,\\ 
\C(x,y,z) \eqp (3k-3)n
\end{array}
\end{equation}
which is equivalent to  
\[
\begin{array}{l}
\C(x\mid y,z) \eqp \C(y\mid x,z) \eqp \C(z\mid x,y) \eqp (k-2)n,  \\
I(x:y\mid z)  \eqp I(x:z\mid y)  \eqp I(y:z\mid x) \eqp n,\ 
I(x:y:z) \eqp 0
\end{array}
\]
see Fig.~\ref{fig:profiles}~(b).
			\begin{figure}[h]
    			 \begin{subfigure}[b]{0.5\textwidth}
			 	\begin{flushleft}
				\begin{tikzpicture}[scale=0.49]
				  \draw \firstcircle node[ left] {\small $n$};
				  \draw \secondcircle node [right] {\small $n$};
				  \node at (95:0.90)   {\small $n$};
				  \node at (150:4.75) {\Large $x$};
				  \node at (30:4.75) {\Large $y$};
				\end{tikzpicture}
				\vspace{6em}
				\caption{Complexity profile for Examples~\ref{example:1} and~\ref{example:line-point}: 
				\newline $\C(x\mid y ) \eqp n$, $\C(y\mid x) \eqp n$, $\Inf(x:y) \eqp n$.\newline}
				\end{flushleft}
			\end{subfigure}
			\hspace{2em}
    			 \begin{subfigure}[b]{0.49\textwidth}
			 	\begin{flushleft}
				\begin{tikzpicture}[scale=0.6]
				  \draw \firstcircle node[above left] {\small $(k-2)n$};
				  \draw \secondcircle node [above right] {\small $(k-2)n$};
				  \draw \thirdcircle node [below] {\small $(k-2)n$};
				  \node at (95:0.05)   {\small $0$};
				  \node at (90:1.55) {\small $ n$};
				  \node at (210:1.75) {\small $n$};
				  \node at (330:1.75) {\small $n$};
				  \node at (150:4.75) {\Large $x$};
				  \node at (30:4.75) {\Large $y$};
				  \node at (270:4.75) {\Large $z$};
				\end{tikzpicture}
				\caption{Complexity profile for Proposition~\ref{p:hypergraph}:
				 \newline $\C(x\mid y,z)\eqp (k-2)n$, $\Inf(x:y)\eqp n$, \newline $\Inf(x:yz)\eqp2n$,  $\Inf(x:y:z) \eqp0$.}
				\end{flushleft}
			\end{subfigure}
			\caption{Diagrams  with complexity profiles for Examples~\ref{example:1}-\ref{example:line-point} and Proposition~\ref{p:hypergraph}.}
				\label{fig:profiles}
			\end{figure}
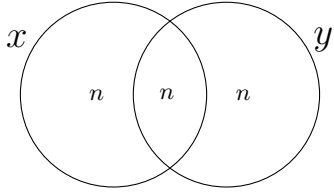
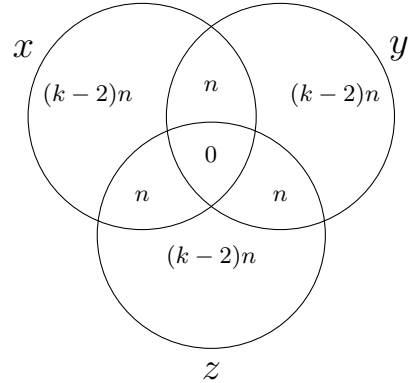
In this setting, Theorem~\ref{th:algorithmic-3} gives the optimal size of a secret key 
\begin{equation} \label{eq:key-size-specific} 
\frac12\big( \Inf(x:y\mid z)  + I(x:z\mid y)  +  I(y:z\mid x)  \big) + \Inf(x:y:z) \eqp1.5n.
\end{equation}
Our aim is to bound communication complexity for inputs with this complexity profile:
\begin{theorem}[special case of Theorem~\ref{th:main}]\label{th:main-special-case}
In the setting of Theorem~\ref{th:algorithmic-3}, communication complexity of a protocol with simultaneous messages
(the total number of bits sent by Alice, Bob, and Charlie) 
for some triples of inputs $(x,y,z)$ with complexity profile~\eqref{eq:main-profile} cannot be smaller than $(3k-4.5)n$, which matches Eq.~\eqref{eq-cc-3-lower-bound}. 
\end{theorem}

\smallskip

\noindent
\textbf{3.2. Preliminary consideration: the need for hard inputs.}
The optimal size of the secret key in Theorem~\ref{th:algorithmic} and Theorem~\ref{th:algorithmic-3} depends only on the complexity profile of $(x,y,z)$ and not on the combinatorial structure of the input. 
The situation with communication complexity (the number of bits sent by the parties)  is different: 
it  may vary significantly for different tuples of inputs with the same complexity profile. 
When we talk about the communication complexity of a protocol, we mean the worst-case complexity, i.e., the maximal number of sent bits among all admissible inputs. 
To prove a lower bound for the worst-case communication complexity, we need to provide a  triple  of inputs  for which the parties have to send long messages. 
We provide a class of inputs that are guaranteed to be ``hard''  (for all valid protocol, for most triples of inputs from this class, communication complexity is high).

\smallskip

\noindent
\textbf{3.3. First step of the argument: conditional on Charlie's message, the mutual information between Alice's and Bob's inputs must increase.} 
We begin with an observation that might  seem to have nothing to do with communication complexity. 
We recall the lower bound for the size of the secret key (that applies to protocols with any communication complexity).
In \cite{jacm2019} (see Theorem~\ref{th:algorithmic}(ii)) it is shown that two parties, Alice and Bob, can agree on secret key of complexity $k$
  \emph{only if} the mutual information between Alice's input $x$ and Bob's input $y$ is greater than $k$. 
 The proof of this statement can be  easily adapted to the following slightly more general  setting:  
\begin{lemma}\label{lemma:1}
Assume that there is a publicly available information $s$ (accessible to Alice, Bob, and the eavesdropper), and besides this information 
Alice is given a private input $x$ and Bob is given a private input  $y$. 
Then, by communication via a public channel accessible to the eavesdropper, Alice and Bob cannot agree on a secret key of complexity  greater than $I(x:y \mid s)$. 
\end{lemma}
We apply this proposition to a protocol with three parties. 
Let $t_C$ denote the concatenation of the messages sent by Charlie. 
This is a piece of publicly available information (accessible to Alice, Bob, and the eavesdropper). 
Due to Lemma~\ref{lemma:1}, Alice and Bob cannot agree on a secret key with Kolmogorov  complexity  greater than   $I(x:y\mid t_C)$ (at this point we ignore  whether Charlie can learn the same key or not). 
Hence,  in the settings  \eqref{eq:main-profile},  a secret key of size \eqref{eq:key-size-specific} can be produced only if 
$
 I(x:y\mid t_C) \gep 1.5n.
$
Observe that in the setting \eqref{eq:main-profile} the mutual information between $x$ and $y$ is equal to $n$. 
This means that the mutual information between Alice's and Bob's inputs \emph{conditional on Charlie's message}, i.e., $ I(x:y\mid t_C)$,  is  bigger than 
the unconditional mutual information between Alice's and Bob's inputs, i.e., $I(x:y)$.  
A pretty standard information-theoretic argument implies that the gap between $I(x:y)$ and $ I(x:y\mid t_C)$ is not greater than  the mutual information between $\langle x,y\rangle$ and $t_C$,  and we conclude that
$
 I(x,y:t_C)  \gep n/2.
 $
In other words, Charlie must send a message  $t_C$ that has $ \ge n/2$ bits of mutual information with the pair of inputs of Alice and Bob.
A similar argument implies that Alice  must send a message $t_A$ such that $ I(y,z:t_A)  \gep n/2$ 
and Bob  must send a message $t_B$ such that $ I(x,z:t_B)  \gep n/2$. 

This part of the argument is based on Lemma~\ref{lemma:1}, which re-employs an argument from \cite{jacm2019} in a pretty direct way. 
So at this stage  we need no  substantially new ideas.

\medskip

\noindent
\textbf{3.4. Second step of the argument: it may be difficult for Alice to send a message increasing the mutual information between Bob's and Charlie's inputs.} 
We have shown above that in the setting \eqref{eq:main-profile} Alice, Bob, and Charlie can agree on a secret  key of optimal size only if each of them sends a messages
that contains $\gep n/2$ bits of mutual information with the inputs of two other parties

We are going to show that this may require sending  \emph{very long} messages (much longer than  $ n/2$ bits). 
This part of the argument is the main technical contribution of our paper. 
To explain this idea, we make a digression and discuss a similar problem in simpler settings.

\smallskip
\noindent
\emph{Digression: how to say something that the interlocutor already knows.}
Let us consider randomized communication protocols with two participants playing  non-symmetric roles. 
We call the participants Speaker and Listener
and assume that  Speaker holds an input string $a$ and Listener holds another input string $b$.
This is a one-way protocol: Speaker  sends a message to Listener in one round, without any feedback.
The aim of Speaker  is to send to Listener a message that is \emph{not completely unpredictable} from the point of view of Listener. 
More precisely, Speaker's message must have positive (and non-negligible) mutual information with Listener's input~$b$. 
We start with a simple example when the task of Speaker is trivial.
\begin{example}\label{example:1}
Let Speaker is given a string $a=uv$ and Listener is given a string $b=uw$, where $u$, $v$, and $w$ are independent incompressible strings of length $n$, i.e., 
$
\C(uvw) \eqp \C(u) + \C(v) + \C(w) \eqp 3n.
$
Observe that 
\begin{equation}\label{eq:profile-line-point}
\C(a)\eqp 2n,\  \C(b)\eqp 2n,\  \Inf(a:b)\eqp n
\end{equation}
(see the diagram in Fig.~\ref{fig:profiles}~(a)).
In this setting, if Speaker wants to  communicate a message of length $n$ with a \emph{high} mutual information with Listener's $y$, she may send a part of $u$,  which is know to both participants of the protocol.
On the other hand, if Speaker wants to  communicate a  message  with a \emph{low} mutual information with Listener's $b$, this is also possible:
Speaker may send a part of $v$,  which is know to Speaker but not to the Listener.
\end{example}
Let us proceed with a  less trivial example. 
\begin{example}\label{example:line-point} 
Now we consider a pair $(a,b)$ with the same complexity profile as in Example~\ref{example:1} but with a different combinatorial structure. 
Let $a$ be a line in the projective plane over the finite field $\mathbb{F}_{2^n}$ and $b$ be a point in the same projective plane incident to $a$, and the pair $(a,b)$ have the maximal possible complexity
(among all incident pairs $(\text{line}, \text{point})$ in the plane).
For these $a$ and $b$ we have the same complexity profile \eqref{eq:profile-line-point}. 
Indeed, we need two elements of the field ($2n$ bits of information) to specify a line or a point, 
but we need only one element of the field ($n$ bits of information) to specify a point when  a line is known. 
However, the combinatorial properties of this pair are very different from the properties of the pair in Example~\ref{example:1}. 

If Speaker is given $a$ and Listener is given $b$ as above, then Speaker cannot  send a \emph{reasonably short} message having non-negligible mutual information with Listener's input $b$.
In fact, if Speaker  wants to send to Listener a message $m = m(a)$ having $\delta$ bits of mutual information with $b$, then the size of $m$ must be at least $n+\delta$.
In particular, if the message $m$ is shorter than $n$, then it cannot contain any information on $b$.
We prove this statement in Section~\ref{sec:1}. 
\end{example}

Example~\ref{example:line-point} is an instance of a much more general phenomenon.
Let us have a bipartite graph $G=(V_L,V_R,E)$, where the set of vertices is $V_L\cup V_R$ and the set of edges is $E\subset V_L\times V_R$. 
We assume that the graph is bi-regular, i.e., all vertices in $V_L$ have the same degree $D_L$ and all vertices in $V_R$ have the same  degree $D_R$ (we always assume that $D_L \ge D_R$). 
We say that $G$ is a \emph{spectral expander}\footnote{We use the term \emph{expander} without assuming that the degree of a graph is constant.} if  the second eigenvalue of its adjacency matrix $\lambda_2 = O(\sqrt{D_L})$. 
Let $(x,y)\in E$ be a ``typical'' edge of this graph (in the sense that its Kolmogorov complexity is close to the maximum possible value), and
let  $x$ and $y$ be the inputs given to Alice and Bob respectively.
Then we have a property  similar to Example~\ref{example:1}:  if Alice wants to send a message having $\delta$ bits of mutual information with Bob's data $y$, she must send a message of size at least $\log D_R + \delta$.
We prove this fact  using the Expander Mixing Lemma. 
(Example~\ref{example:line-point} corresponds to the graph $G=(V_L,V_R,E)$ where $V_L$ consists of all lines in the plane, $V_R$ consists of all points in the plane, and $E$ is the set of all pairs of incident lines and points;
it is known that this graph is a spectral expander.) [End of \emph{Digression}.]

\smallskip

Now we  generalize the observations from  the \emph{Digression} above and explain the main idea of the proof of Theorem~\ref{th:main-special-case}. 
To explain the principal construction, we introduce the notion of a tri-expander hypergraph, which  extends the conventional definition of a bipartite expander.
\begin{definition}
Let $G = (V_1, V_2, V_3, H)$ be a hypergraph  where
\begin{itemize}
\item the set of vertices consists of three disjoint parts  $V_1$, $V_2$, $V_3$ of the same cardinality
\item the set of hyperedges is a set $H\subset V_1 \times V_2 \times V_3$.
\end{itemize}
We consider three bipartite graphs $G_1$, $G_2$, $G_3$ associated with hypergraph $G$:  each  $G_i$ is a bipartite graph 
$(V_i, V_j \times V_\ell, E_i)$ (here $ j = i+1\mod 3$ and $\ell = i+2\mod 3$), where $(x, \langle y,z \rangle)\in E_i$ if and only if the triple  $\{x, y, z\}$ corresponds to a hyperedge in $H$.
The hypergraph is called \emph{tri-expander} if the graphs $G_1$, $G_2$, $G_3$ are bi-regular spectral expanders.
\end{definition}
\begin{remark}
The definition of a tri-expander and an application of the expander mixing lemma to the associated bipartite graphs (see below) seems to be similar but not literally equivalent to the definition of the second eigenvalue for $3$-uniform hypergraph and  the hypergraph generalization of the expander mixing lemma in \cite{hypergraph-mixing-lemma}.
\end{remark}

We show that the communication is costly  for a triple of inputs $(x,y,z)$ that is a hyperedge in a tri-expander.
To this end, we combine the idea from paragraph~3.3 with an argument similar to the observation sketched in the \emph{Digression}:  
each party must send a message having non-negligible mutual information with two other inputs (an information-theoretic argument)
but this is only possible when each of the messages is very long (due to the spectral bound and the expander mixing lemma).

\medskip
\noindent
\textbf{3.5. Construction of a tri-expander.}
To conclude the proof of the main result it remains to show that there exists a  tri-expander with suitable parameters:

\begin{proposition}\label{p:hypergraph}
For all integer numbers $k\ge0$ and $n\ge 1$ there exists a tri-expander $G = (V_1, V_2, V_3, H)$  such that 
\begin{itemize}
\item $\# V_1 = \# V_2 = \#V_3 = \Theta(2^{kn})$, 
\item for all $i\not= j$, for every $x\in V_i$ there exists $\Theta(2^{(k-1)n})$ vertices $y\in V_j$ such that $x$ and $y$ are adjacent in the hypergraph,
\item $\# H = \Theta(2^{kn} \cdot 2^{(k-1)n} \cdot 2^{(k-2)n})$.
\end{itemize}
\end{proposition}
\begin{proof}
We construct such a tri-expander explicitly.  We fix the finite field  $\mathbb{F}_{2^n}$ with $q=2^n$ elements,
the $(k+2)$-dimensional space ${\cal L} $  over this field,
and the subspace ${\cal L}_{so} \subset {\cal L} $ that consists of self-orthogonal vectors. Observe that $\# {\cal L}_{so} = \# {\cal L} / q = q^{k+1}$ (a subspace of co-dimension $1$ in ${\cal L}$). 
Let $V$ denote the space of all \emph{directions} in  ${\cal L}_{so} $ except for the direction $(1,\ldots,1)$ (which is self-orthogonal for even $k$). Observe that $\# V =  \Theta(q^k)$.

We let  $V_1 = V_2 = V_3 = V$ and
define $H$ as the set of all triple $(x,y,z) \in V^3$ such that $x,y,z$ are \emph{distinct and pairwise orthogonal} directions in ${\cal L}_{so} $.

For every vector $x\in {\cal L}_{so}$, the condition of being orthogonal to $x$ determines in ${\cal L}_{so}$ a subspace of co-dimension $1$; this subspace consists of $q^{k}$ vectors
(including $x$ itself as it is self-orthogonal) and, respectively, $(q^{k} - 1 )/ (q-1)$ directions (again, including the direction  collinear with $x$). 
If we have two non-collinear vectors $x,y\in {\cal L}_{so}$, then the condition of being orthogonal to $x$ and $y$ determines in ${\cal L}_{so}$ a subspace of co-dimension $2$;  this subspace consists of $q^{k-1}$ vectors
(including $x$ and $y$), which corresponds to   $(q^{k-1} - 1 )/ (q-1) = \Theta(q^{k-2})$ directions (once again, including the directions collinear with $x$ and with $y$). 

Thus, we have $ \Theta(q^{k})$ individual vertices,  $\Theta(q^{k} \cdot q^{k-1})$ pairs of adjacent vertices, and $\Theta(q^{k} \cdot q^{k-1} \cdot q^{k-2})$ adjacent triples (hyperedges).
It remains to compute the eigenvalues of the associated bipartite graphs.
\begin{lemma}\label{l:tri-expander}
The hypergraph $G = (V_1,V_2,V_3,H)$ defined above is a tri-expander.
\end{lemma}
(This fact might be known, but for lack of a reference we give a proof in Appendix~\ref{sec:spectral-gap}.)
In the proof we use rich symmetries of this hypergraph.
To guarantee these symmetries, we have imposed the restrictions that may seem artificial: the characteristic of the field is $2$, we take into consideration only self-orthogonal vectors,  the direction $(1,\ldots,1)$ is excluded from $V$.
\end{proof}
\begin{remark}
A standard counting shows that for most hyperedges $(x,y,z) $ in the graph from Proposition~\ref{p:hypergraph} we have
$
\C(x) \eqp \log \Theta(q^k) \eqp kn,\ 
\C(x, y) \eqp \log \Theta(q^k \cdot q^{k-1})  \eqp (2k-1)n, \\ 
\C(x,y,z) \eqp \log  \Theta(q^k \cdot q^{k-1} \cdot q^{k-2})  \eqp (3k-2)n,
$
and we get the profile \eqref{eq:main-profile}.
\end{remark}

\section{When it is hard to say anything that the interlocutor  already knows}
\label{sec:1}

\vspace{0em}
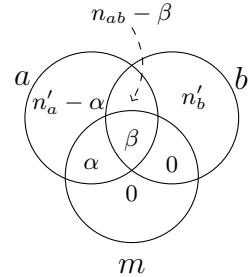
\begin{wrapfigure}{R}{0.50\textwidth}
        				 \centering
				\begin{tikzpicture}[scale=0.35]
				  \draw \firstcircle node[above left] {};
				  \draw \secondcircle node [above right] {\small $n_b'$};
				  \draw \thirdcircleup node [below] {\small $0$};
				  \node at (95:0.05)   {\small $\beta$};
				  \node at (90:1.4) {} ; 
				  \node at (210:1.75) {\small $\alpha$};
				  \node at (330:1.75) {\small $0$};
				  \node at (150:4.75) {\large $a$};
				  \node at (30:4.75) {\large $b$};
				  \node at (270:4.75) {\large $m$};
				  
				  \node at (148:2.80) {\small $n_a'-\alpha$};
				  \node at (90: 4.80) {\small $ n_{ab}-\beta$};
				  \draw[dashed] (0,4.3)  edge[bend left,->]  (0,1.5) ;
				\end{tikzpicture}			
			\caption{The profile in Theorem~\ref{thm:1}.}
				\label{fig:profile-xyma}
\end{wrapfigure}

In this section we explain our main technical tool. We consider randomized communication protocols with two participants, Speaker  and Listener. 
We assume that Speaker  holds an input string $a$ and Listener holds another input string $b$; we assume also that the complexity profile of the pairs $(a,b)$ is known to all parties. 
The aim of Speaker in this protocol is to send to Listener a message that  has non-negligible mutual information with Listener's input~$b$, as we discussed in Section~\ref{sec:informal-proof}.

\begin{theorem}\label{thm:1} 
Let $G=(V_L,V_R,E)$ be a bipartite spectral expander such that $N= \# V_L$, $M=\# V_R$, and $(D_L, D_R)$ are the degrees of the edges in $V_L$ and $V_R$ respectively.  
Let $(a,b) \in E$ be a ``typical'' edge in the graph, i.e., $\C(a,b) \eqp \log \#E$, and $\C(m\mid a) \eqp 0$.
Then
$
\Inf(m:b) \lep \max\{ 0, \C(m) - \C(a\mid b) \}.
$
In particular, if the length of $m$ is less than $\C(a\mid b)$, then $\Inf(m:b) \eqp 0$.
\end{theorem}
\begin{remark}\label{rem:relativization} 
The statement of Theorem~\ref{thm:1} remain valid if we relativize all terms of Kolmogorov complexity  in this statement conditional on a string $r$ such that $\Inf(r:(a,b))\eqp0$. 
In what follows we present the proof without $r$. But every step of this argument trivially relativizes conditional on $r$ assuming that $\C(a,b \mid r) \eqp \C(a,b) \eqp \log \#E$,
we only need to add routinely  the random bit string $r$ to the condition of all terms with Kolmogorov complexity appearing in the proof.  
\end{remark}


\begin{proof}[Proof of Theorem~\ref{thm:1}]
We denote $n_a:=\log N$, $n_b=\log M$, $n_a'=\log D_R$, $n_b'=\log D_L$, and $n_{ab} := n_a - n_a'$. 
Using this notation, we have
\[
\C(a) \eqp n_a,\ \C(b) \eqp n_b, \C(a\mid b) \eqp n_a', \ \C(b) \eqp n_b,\ \C(b\mid a) \eqp n_b',\ \Inf(a:b) \eqp n_{ab}. 
\]
Since Speaker computes the message $m$ given the input data $a$, we have $\C(m\mid a) \eqp 0$. 
We denote $\alpha : = \Inf(m:a\mid b)$ and $\beta := \Inf(m:a:b)$. 
It is easy to verify that $\C(m) = \alpha+\beta$.
The complexity profile for the triple $(a,b,m)$ is shown in Fig.~\ref{fig:profile-xyma}.

\smallskip

\noindent
\textbf{Case 1.} Assume that $\C(m) \le n_a' -  2\cdot \mathsf{const} \cdot  \log n$ for some $\mathsf{const}>0$ (a constant to be specified later). 
In this case, to prove the theorem,  we need to show that  $\Inf(m:y) \eqp 0$. 
In our notation this is equivalent to $\beta \eqp0$. 
More technically, we are going to show that 
\begin{equation}
\label{eq:beta}
\beta \le \mathsf{const} \cdot \log n.
\end{equation}
For the sake of contradiction we assume that \eqref{eq:beta} is false. 
It is enough to consider the case when $\beta$ is \emph{somewhat large} but not \emph{too large}, i.e., just slightly above the threshold \eqref{eq:beta}. 
Indeed, any communication protocol violating  \eqref{eq:beta} can be converted in a different protocols with  the same or a smaller value of $\alpha$ and with $\beta = \mathsf{const} \cdot  \log n + O(1)$.
To this end, we observe that by discarding a few last bits of Speaker's message $m$ we make the protocol only simpler.
So, we may replace the initial message $m$ with the shortest prefix of the initial message that still violates \eqref{eq:beta}. 
Thus, in what follows,  we assume w.l.o.g. that 
\[
\mathsf{const} \cdot  \log n < \beta \le \mathsf{const} \cdot  \log n + O(1).
\]
Let us define
$
A := \{ a' \ :\ \C(a' \mid m) \le  \C(a \mid m)  \}
\text{ and }
B := \{ b' \ :\ \C(b' \mid m) \le  \C(b \mid m)  \}.
$
We use the following standard claim:

\smallskip
\noindent
\textbf{Claim.} \emph{ $\# A =2^{\C(a \mid m) \pm O(\log n)} = 2^{n_a-\alpha-\beta \pm O(\log n)}$ and  $\# B = 2^{\C(b \mid m) \pm O(\log n)}= 2^{n_b - \beta \pm O(\log n)}$} (see, e.g.  \cite[Claim~4.7]{jacm2019}).

\smallskip

From the claim we obtain 
$
\# A  \cdot \# B = 2^{n_a- \alpha   -\beta  + n_b -\beta \pm O(\log n)} =  2^{n_a +n_b  - \C(m) - \beta  \pm O(\log n)} .
$
Since $\C(m) \le n_a' -  2\cdot \mathsf{const}  \log n$ and $\beta  <  \mathsf{const}  \log n + O(1)$, we conclude
\[
\begin{array}{rcl}
n_a + n_b  - \C(m) -\beta n \pm O(\log n) 
&\ge& n_a + n_b  - ( n_a'-  2\cdot \mathsf{const} \cdot  \log n) -  \mathsf{const} \cdot  \log n  - O(\log n) \\
&\ge& n_{ab} + n_b   +  \mathsf{const} \cdot  \log n   - O(\log n) 
\ge  n_{ab} + n_b . 
\end{array}
\]
(To get the last inequality, we should choose  the value of $  \mathsf{const} $ in \eqref{eq:beta} so that  $ \mathsf{const} \cdot \log n$  majorizes the term $O(\log n)$ in the inequality above.)
Thus,  $\# A \cdot \# B \ge 2^{n_{ab} + n_b} = \frac{M^2}{D_L}$. 

With the Expander Mixing Lemma  (Corollary~\ref{mixing-lemma-large-sets})  we obtain 
\[
E(A,B)  = O\left(\frac{D_L  \cdot \# A \cdot \# B}{M}\right) =  O\left(\frac{ \# A \cdot \# B}{M/D_L}\right).
\]
Now observe that given $m$ and the numbers $\C(a\mid m)$ and $\C(b \mid m)$ we can enumerate the sets $A$ and $B$ and, therefore, 
we can describe $(a,b)$ by the index of this edge in the list of all edges between $A$ and $B$. 
The size of such an index is $ \log E(A,B)$. 
Hence,
\[
\begin{array}{rcl}
\C(a,b \mid m) \lep \log E(A,B)  &\lep&  (n_a +n_b  - \C(m) - \beta)   - (n_b- n_b')\\
&=& n_a +n_b'  - \C(m) - \beta 
= \C(a,b)   - \C(m) - \beta,
\end{array}
\]
and 
$
\C(a,b) \lep \C(m) + \C(a,b \mid m) \lep  \C(a,b)   -\beta .
$
The terms $O(\log n)$ hidden in the notation $\lep$ and $\eqp$ in this  inequality do not depend on $\beta$.
Thus, we get a contradiction if the constant in \eqref{eq:beta} is chosen large enough.

\smallskip

\textbf{Case 2.} Now we assume that $\C(m) = n_a' + \delta $ for an arbitrary  $\delta$. 
Denote by $m'$ the prefix of $m$ of length $(n_a'-\mathsf{const} \log n)$ and by $m''$ the suffix of $m$ of length 
$(\delta  + \mathsf{const}  \log n)$.
We know from Case~1 that $\Inf(m' : b)\eqp 0$. 
It remains to apply the chain rule,
\[
\Inf(m : b) \eqp \Inf(m' : b) + \Inf(m'' : b \mid m') 
\eqp \Inf(m'' : b\mid m') \lep |m''| \eqp  \delta.
\]
and the theorem is proven.
\end{proof}
From this theorem we obtain  immediately the following corollary. 
\begin{corollary}\label{thm:1-corollary} 
Let $G=(V_L,V_R,E)$ be a bipartite spectral expander such that $N= \# V_L$, $M=\# V_R$, and $(D_L, D_R)$ are the degrees of the edges in $V_L$ and $V_R$ respectively.  

(a) We assume that  Speaker and Listener are given, respectively, $a$ and $b$ that are ends of a typical edge $(a,b) \in E$ in the graph.  
We consider a one-round communication protocol where Speaker sends to Listener a message $m = m (a)$. 
Then
$
\Inf(m:b) \lep \max\{ 0, \C(m) - \C(a\mid b) \}.
$
In particular, if the length of $m$ is less than $\C(a\mid b)$, then $\Inf(m:b) \eqp 0$.

(b) A similar statement is true if Speaker  and Listener are given instead of $a$ and $b$ some inputs $a'$ and $b'$ such that $\C(a'\mid a)\eqp 0$ and $\C(b' \mid b)\eqp 0$
(e.g., if Speaker is given a function of a vertex $a\in V_L$ and Listener is given a function of a vertex $b\in V_R$).
\end{corollary}

\section{Protocols with simultaneous  messages : a warm-up example}
\label{sec:parallel-messages}

In this section we use Theorem~\ref{thm:1}  from the previous section to prove a lower bound for communication complexity of the following problem. 
Alice and Bob hold, respectively, lines $a$ and $b$  in a plane (intersecting at one point $c$). 
They send to Charlie (in parallel, without interacting with each other) some messages so that Charlie can reconstruct the intersection point. 
We argue that the trivial protocol (where Alice and Bob send the full information on their  lines) is essentially optimal.

\begin{theorem}\label{th1}
Let Alice and Bob be given  lines in the projective plane over the finite field $\mathbb{F}_{2^n}$ (we denote them $a$ and $b$ respectively), and it is known that the lines intersect at  point $c$.
Another participant of the protocol Charlie has no input information. 
Alice and Bob (without a communication with each other)
send to Charlie messages $m_A $ and $m_B  $ so that Charlie can find $c$, see Fig.~\ref{fig:xy2z}. 
For every communication protocol for this problem, for some $a,b$ we have
$
|m_A| + |m_B| \gep 4n,
$
which means essentially that in the worst case Alice and Bob must send to Charlie all their data  \textup(for a typical pair of lines we have $  \C(a) +  \C(b) \eqp 4n$\textup).
\end{theorem}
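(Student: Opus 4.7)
The plan is to apply Proposition~\ref{prop:1} twice, once to the orthogonal pair $(x,z)$ with Alice's message and once to $(y,z)$ with Bob's message, and then to combine the two bounds with an information-theoretic identity together with a conditional data-processing argument.

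First, I would fix a pair of lines $(x,y)$ of maximal joint complexity $\C(x,y)\eqp 4n$ and put $z:=x\cap y$. Representing lines and points by their homogeneous coordinates in $\mathbb{F}_{2^n}^3$, so that ``point lies on line'' is the same as ``the two directions are orthogonal'', a short computation gives the profile
$\C(x)\eqp\C(y)\eqp\C(z)\eqp 2n$, $\Inf(x:y)\eqp 0$, $\Inf(x:z)\eqp\Inf(y:z)\eqp n$, $\Inf(x:y:z)\eqp 0$, and $\C(x,y,z)\eqp 4n$. In particular $\C(x,z)\eqp\C(y,z)\eqp 3n$, so each of $(x,z)$ and $(y,z)$ is a pair of orthogonal directions in $\mathbb{F}_{2^n}^3$ of maximal possible complexity, exactly the setting of Example~\ref{ex:two-directions} and Proposition~\ref{prop:1}. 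Following Remark~\ref{rem:randomness}, I would relativize the whole argument to the public random string $r$, which for the vast majority of inputs is independent of $(x,y)$.

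Applying Proposition~\ref{prop:1} to the pair $(x,z)$ with $m_A=m_A(x)$, and analogously to $(y,z)$ with $m_B=m_B(y)$, gives
\[
\Inf(m_A:z)\lep\max\{0,\,\C(m_A)-n\}\quad\text{and}\quad\Inf(m_B:z)\lep\max\{0,\,\C(m_B)-n\};
\]
Proposition~\ref{prop:1} is really a statement about any function of the first coordinate of an orthogonal max-complexity pair, so it applies here even though $z$ is the protocol's output rather than an input held by a party. In the opposite direction, since Charlie reconstructs $z$ from $(m_A,m_B,r)$ one has $\Inf(z:m_A,m_B)\eqp\C(z)\eqp 2n$, and the standard triple-information identity
\[
\Inf(z:m_A)+\Inf(z:m_B)\eqp\Inf(z:m_A,m_B)+\Inf(m_A:m_B)-\Inf(m_A:m_B\mid z)
\]
reduces the matching lower bound on $\Inf(z:m_A)+\Inf(z:m_B)$ to showing that both $\Inf(m_A:m_B)$ and $\Inf(m_A:m_B\mid z)$ are negligible. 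The first follows from $\Inf(x:y)\eqp 0$ and the data-processing inequality for Kolmogorov complexity; the second from $\Inf(x:y\mid z)\eqp 0$, which is a direct consequence of the profile above and expresses the fact that two generic lines through a common point are conditionally independent given that point. Hence $\Inf(z:m_A)+\Inf(z:m_B)\eqp 2n$, and combining with the Proposition~\ref{prop:1} bounds yields $2n\lep(\C(m_A)-n)+(\C(m_B)-n)$, i.e.\ $\C(m_A)+\C(m_B)\gep 4n$, which implies $|m_A|+|m_B|\gep 4n$.

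The main obstacle I anticipate is the justification of the conditional data-processing inequality $\Inf(m_A:m_B\mid z)\lep\Inf(x:y\mid z)$: one must check that conditioning on the reconstructed $z$ does not disturb the ``$m_A$ is a function of $x$ alone, $m_B$ a function of $y$ alone'' structure (which is the main reason for relativizing to $r$ at the outset), and that the logarithmic errors coming out of the DPI together with those from the profile computation combine coherently with the $\eqp$, $\lep$, $\gep$ bookkeeping. Once this is in place, the rest is routine manipulation with the triple-information Venn diagram of Fig.~\ref{x-y-z}.
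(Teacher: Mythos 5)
Your proposal is correct and follows essentially the same route as the paper's proof: apply Proposition~\ref{prop:1} to the orthogonal pairs $(x,z)$ and $(y,z)$, use the conditional data-processing inequality to get $\Inf(m_A:m_B\mid z)\lep\Inf(x:y\mid z)\eqp 0$, and deduce from $\Inf(m_A m_B:z)\gep 2n$ that $\Inf(m_A:z)\eqp\Inf(m_B:z)\eqp n$, which forces $\C(m_A),\C(m_B)\gep 2n$. The only nuance is that before summing the two Proposition~\ref{prop:1} bounds you should invoke $\Inf(m_A:z)\lep\Inf(x:z)\eqp n$ and $\Inf(m_B:z)\lep\Inf(y:z)\eqp n$ to pin each term to $n$ (ruling out the zero branch of the maxima), exactly as the paper does.
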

\begin{figure}[H]
\centering
\begin{tikzpicture}[scale=0.6]
\begin{scope}
\filldraw[black] (0,2.5) circle (0pt) node[red]{\bf \large Alice};
\draw[gray] (-2,-2) -- (2,-2) -- (2,2) -- (-2,2) -- (-2,-2) ;
\draw[gray, thick,dashed] (-1,2) -- (1,-2);
\filldraw[black] (1.4,1.6) circle (0pt) node{$a$};
\draw[red, thick] (-2,-1.8) -- (2,1.8);
\end{scope}

\begin{scope}[shift={(0,-7)}]
\filldraw[black] (0,2.5) circle (0pt) node[blue]{\bf \large Bob};
\draw[gray] (-2,-2) -- (2,-2) -- (2,2) -- (-2,2) -- (-2,-2) ;
\draw[blue, thick] (-1,2) -- (1,-2);
\draw[gray, thick,dashed] (-2,-1.8) -- (2,1.8);
\filldraw[black] (1.0,-1.4) circle (0pt) node{$b$};
 \end{scope}

\begin{scope}[shift={(14,-3)}]
\filldraw[black] (-3.3,0) circle (0pt) node{\bf \large Charlie};
\draw[gray] (-2,-2) -- (2,-2) -- (2,2) -- (-2,2) -- (-2,-2) ;
\draw[gray, thick,dashed] (-1,2) -- (1,-2);
\draw[gray, thick,dashed] (-2,-1.8) -- (2,1.8);
\filldraw[black] (0,0) circle (2pt) node[anchor=west]{$c= \,?$};
 \end{scope}

\draw [->,ultra thick,red] (2.5,-0.0) to [out=10, in=120]  (10,-2.6);
\filldraw[black] (3.5,0.7) circle (0pt) node[anchor=west]{\large $m_A=m_A(a,\text{public random bits})$};

\draw [->,ultra thick,blue] (2.5,-6.5) to  [out=-10, in=-120]   (10,-3.5);
\filldraw[black] (3.5,-7.2) circle (0pt) node[anchor=west]{\large $m_B=m_B(b,\text{public random bits})$};

\end{tikzpicture}
\caption{Alice holding $a$ and Bob holding $b$ send simultaneous messages to Charlie, who computes~$c$.}
\label{fig:xy2z}
\end{figure}
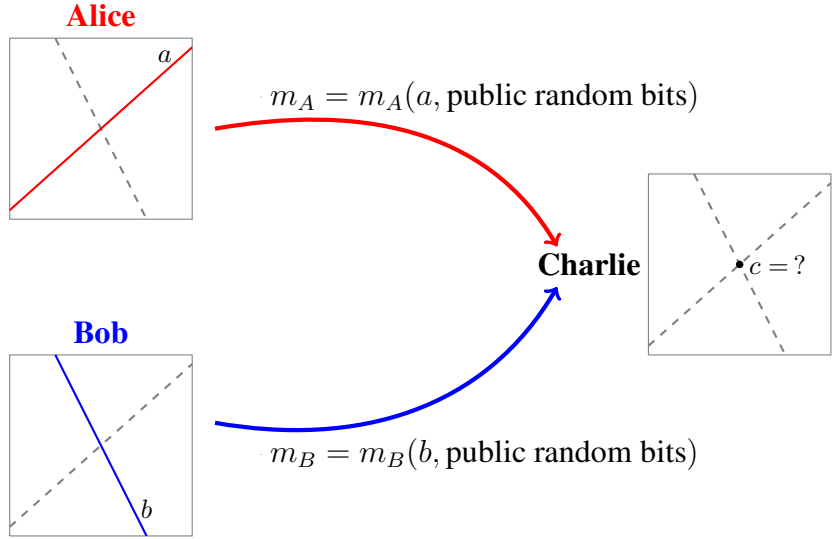
In the setting of Theorem~\ref{th1}, the inputs of Alice and Bob contain $n$ bits of the mutual information with $c$, so an easy lower bound for the communication complexity is $n+n=2n$,
see Fig.~\ref{fig:two-lines-and-point}.
However, due to the spectral properties of graphs implicitly present in this construction, the true communication complexity of this problem is twice bigger. 
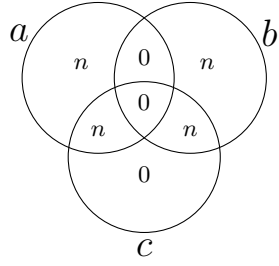
\begin{figure}[h]
\centering
				\begin{tikzpicture}[scale=0.4]
				  \draw \firstcircle node[above left] {\small $n$};
				  \draw \secondcircle node [above right] {\small $n$};
				  \draw \thirdcircle node [below] {\small $0$};
				  \node at (95:0.05)   {\small $0$};
				  \node at (90:1.55) {\small $ 0$};
				  \node at (210:1.75) {\small $n$};
				  \node at (330:1.75) {\small $n$};
				  \node at (150:4.75) {\Large $a$};
				  \node at (30:4.75) {\Large $b$};
				  \node at (270:4.75) {\Large $c$};
				\end{tikzpicture}
\caption{Complexity profile for two lines ($a$ and $b$) and their intersection point $c$ in the plane over $\mathbb{F}_{2^n}$.}
\label{fig:two-lines-and-point}
\end{figure}
\begin{proof}[Sketch of the proof.]
In this sketch we ignore the public randomness and explain the argument for deterministic protocols. 
A generalization for protocols with public randomness  is pretty straightforward, see the full proof below.

Let $(a,b)$ be a pair of lines in a  projective  plane over $\mathbb{F}_{2^n}$  intersecting at a point $c$,  such that 
$
\C(a,b) \eqp \C(a) + \C(b) \eqp 4n
$
(which is the case for most pairs of lines in the plane).   
Observe that $\Inf(a:c)\eqp n$ and $\Inf(b:c)\eqp n$. 
It follows that for the messages $m_A= m_A(a)$ and $m_B=m_B(b)$ we have $\Inf(m_A:c)\lep n$ and $\Inf(m_B:c)\lep n$.
Using standard information theoretic inequalities, one can show that Alice's message $m_A$ and Bob's message $m_B$ determine the point $c$ uniquely only if 
 $\Inf(m_A  : c)  \eqp  n$ and $\Inf(m_B : c) \eqp n$. Thus, Alice and Bob must send messages with  large enough information on $c$.

The graph of possible pairs $(a,c)$ and the graph of possible pairs $(b,c)$ (the configurations $(\text{line}, \text{point})$)
is the same as  in Example~\ref{example:line-point}. 
Hence, we can apply Theorem~\ref{thm:1} (Alice and Bob play the roles of Speaker, and Charlie plays the role of Listener) and conclude that
$
\Inf(m_A:c) \lep \max\{ 0, \C(m_A) - n\}
\text{ and }
\Inf(m_B:c) \lep \max\{ 0, \C(m_B) - n\}.
$
In particular,  $\Inf(m_A  : c)  \eqp  n$ and $\Inf(m_B : c) \eqp n$ only if Kolmogorov complexities of $m_A$ and $m_B$ are both at least $2n$. Thus, the total communication complexity is $\gep 2n+2n=4n$.
\end{proof}
In what follows we present the full proof of Theorem~\ref{th1}. The reader can skip it and proceed to the proof of the main result in the next section.

\begin{proof}[Full proof of Proof of Theorem~\ref{th1}.]
Let $(a,b)$ be a pair of lines in a  projective  plane over $\mathbb{F}_{2^n}$ such that 
\[
\C(a,b) \eqp \C(a) + \C(b) \eqp 4n
\]
(which is the case for most pairs of lines in the plane), and let  $c$ be the point of intersection of these lines, see Fig.~\ref{fig:two-lines-and-point}.

Denote by $r$ the string of random bits from the public source of randomness (accessible to Alice, Bob, Charlie, and to the eavesdropper).
We assume $r$ and the inputs $(a,b)$ are independent, i.e., $\Inf(r:a,b)\eqp0$ (this is the case with an overwhelming probability). 
For such a string  $r$ all terms with Kolmogorov complexities involving $a,b,c$ do not change if we add $r$ in the condition: 
\[
\begin{array}{l}
\C(a,b,c \mid r) \eqp \C(a,b,c) \eqp \C(a,b), \  \C(a,c\mid r) \eqp \C(a,c),\  \C(b,c\mid r) = \C(b,c),\\  
\C(a\mid r) = \C(a),\ \C(b\mid r) = \C(b),\ \C(c\mid r) = \C(c).
\end{array}
\]
This implies 
\[
\C(a \mid c,r) \eqp \C(a , c \mid r) - \C(c\mid r) \eqp \C(a,c) - \C(c) \eqp \C(a\mid c), 
\]
and, similarly, 
$\C(b \mid c,r)   \eqp \C(b \mid c)$ and $\C(a,b \mid c,r)  \eqp \C(a,b \mid c)$.

Observe that the graph of possible pairs $(a,c)$ and the graph of possible pairs $(b,c)$ (the configurations $(\text{line}, \text{point})$ on the projective plane) is the same as  in Example~\ref{example:line-point}. 
Hence, we can apply Theorem~\ref{thm:1} (Alice and Bob play the roles of Speaker, and Charlie plays the role of Listener;  all terms are relativized conditional on $r$, see Remark~\ref{rem:relativization}) and conclude that
\begin{equation}\label{eq:3}
\Inf(m_A:c \mid r) \lep \max\{ 0, \C(m_A \mid r) - n\}
\text{ and }
\Inf(m_B:c\mid r) \lep \max\{ 0, \C(m_B\mid r) - n\}.
\end{equation}
In particular,  $\Inf(m_A  : c\mid r)  \gep  n$ and $\Inf(m_B : c\mid r) \gep n$ only if Kolmogorov complexities of $m_A$ and $m_B$ are both at least $2n$.

It is easy to verify that $\Inf(a:c\mid r ) \eqp \Inf(a:c) \eqp n$ and $\Inf(b:c\mid r) \eqp \Inf(b:c)\eqp n$. 
Since $m_A$ and $m_B$ are computed from $(a,r)$ and $(b,r)$ respectively, we conclude that 
 \begin{equation}\label{eq:mambz}
 \Inf(m_A:c\mid r) \lep n \text{ and }\Inf(m_B:c\mid r)\lep n. 
 \end{equation}
 We need to show that these two inequalities turn into equalities.
Indeed, by construction,
\[
\begin{array}{rcl}
\Inf(a:b\mid c,r)  &\eqp& \C(a \mid c,r) + \C(b \mid c,r)  - \C(a,b \mid c,r) \\
&\eqp& \C(a \mid c) + \C(b \mid c)  - \C(a,b \mid c) \\
&\lep& n + n - \C(a,b \mid c)  \text{ [we need $n+O(1)$ bits to specify a line given a point] } \\
&\lep& 2n -(\C(a,b) - \C(c))  \\
&\lep& 2n - 4n + 2n \eqp 0.  \text{ [we need $2n+O(1)$ bits to specify a point in the plane] }
\end{array}
\]
As $\C(m_A \mid a,r) \eqp 0$ and $\C(m_B \mid b,r) \eqp 0$, we have  (see Lemma~\ref{lemma:appendix1}(iii))
\[
\Inf(m_A:m_B\mid c,r)  \lep \Inf(a:b\mid c,r) \eqp0.
\]
Therefore, $\Inf(m_A:m_B:c \mid r)  \eqp \Inf(m_A:m_B \mid r) - \Inf(m_A:m_B\mid c , r) \gep0$, and
\[
\Inf(m_A m_B : c\mid r) \eqp \Inf(m_A : c\mid r) + \Inf( m_B : c\mid r) - \Inf(m_A:m_B:c\mid r) \lep  \Inf(m_A : c\mid r) + \Inf( m_B : c\mid r).
\]
On the other hand, since Charlie can compute $c$ given the messages $(m_A, m_B)$ and the string of random bits $r$, we have 
\[
\Inf(m_A m_B : c \mid r) \eqp \C(c\mid r) - \C(c\mid m_A m_B ,r)  \eqp 2n - 0  \gep 2n,
\] 
and, therefore,
\[
  \Inf(m_A : c\mid r) + \Inf( m_B : c\mid r) \eqp 2n.
\]
Keeping in mind \eqref{eq:mambz},
we conclude that  $\Inf(m_A : c\mid r) \eqp n$ and $\Inf( m_B : c\mid r)\eqp n$. 
Due to \eqref{eq:3}, this is possible only of  $\C(m_A)\gep \C(m_A \mid r) \gep 2n$ and $\C(m_B)\gep \C(m_B\mid r) \gep 2n$.
This means that the total length of the sent messages is at least $2n+2n = 4n$ bits.
\end{proof}

\section{Secret key agreement: a lower bound for  the most crucial profile} 
\label{sec:crypto-special-case}

In this section   we prove a lower bound for communication complexity of secret key agreement with three parties. 
Let us recall the setting.
We assume that Alice, Bob, and Charlie are given inputs $x$, $y$,  $z$ respectively with the complexity profile \eqref{eq:main-profile}.
as shown in Fig.~\ref{fig:profiles}(b).
This is a pretty ``generic'' complexity profile;
by choosing $k$, we  control the gap between the complexities of $x,y,z$ and the mutual informations shared by the inputs.

We consider communication protocols with public randomness. 
Denote by $r$ the string of random bits accessible for all the parties (including the eavesdropper).
We assume that Alice, Bob, and Charlie  broadcast simultaneously messages
$
m_A = m_A(x,r),\
m_B = m_A(y,r),\
m_C = m_A(z,r)
$
over a public communication channel. 
Then each of them computes the final result 
\[
\begin{array}{l}
\text{key}_{\rm Alice} (x,r,m_B, m_C), \
\text{key}_{\rm Bob} (y,r,m_A, m_C), \
\text{key}_{\rm Charlie} (z,r,m_A, m_B).
\end{array}
\]
We say that a protocol  is successful if
$
\text{key}_{\rm Alice} = \text{key}_{\rm Bob} = \text{key}_{\rm Charlie} = w
$
(i.e., the parties agree on a common key $w$) and $\C(w  \mid   \langle m_A, m_B, m_C,r\rangle) \eqp |w|$ (i.e., the eavesdropper gets no information on this key).

Theorem~\ref{th:algorithmic-3} claims that for any $\epsilon>0$ there exists a protocol that is successful with probability $(1-\epsilon)$, and the size of the key is equal to \eqref{eq:key-size},
which gives  for the profile \eqref{eq:main-profile} the value $1.5n$.
Moreover, this value of the key is optimal  (up to an additive term $O(\log n)$). 

It was shown in  \cite{jacm2019} that  a secret key of this size can be obtained in an \emph{omniscience} protocol.
In this protocol, the parties broadcast messages so that each of them learns completely the entire triple of inputs $(x,y,z)$.  
The total length of the broadcasted messages bits is less than $\C(x,y,z)$, so an eavesdropper can learn only a partial information on the inputs. 
More specifically, communication complexity of the omniscience protocol  is \eqref{eq-cc-3},
which is $(3k-4.5)n$ for a triple satisfying \eqref{eq:main-profile}.
The gap between $\C(x,y,z) \eqp (3k-3)n$ and the amount of the divulged information is used to produce the secret key of size $1.5n$.

The omniscience protocol used in \cite{jacm2019} provides an \emph{upper bound} on the communication complexity of secret key agreement. 
In what follows we prove the matching  \emph{lower bound}  (for protocols with simultaneous messages) and show that $(3k-4.5)n$  is the optimal  communication complexity
for a protocol of secret key agreement protocols with simultaneous messages for  inputs satisfying \eqref{eq:main-profile}.
The proof follows the scheme sketched in Section~\ref{sec:informal-proof}.
The first ingredient of this proof is Lemma~\ref{lemma:1} (see p.~\pageref{sec:informal-proof}).
\begin{proof}[Sketch of proof of Lemma~\ref{lemma:1}]
This lemma is a relativized version of   \cite[Theorem~4.2]{jacm2019}, where $s$  is used as an oracle. 
One can follow the argument from  \cite{jacm2019} step by step, substituting $s$ as a supplementary   condition in each  term of Kolmogorov complexity appearing in the proof. 
\end{proof}

\begin{corollary}\label{col:1}
Consider a communication protocol with three parties  where Alice is given $x$, Bob is given $y$, and Charlie is given $z$. 
Denote by $m_C$ the concatenation of all messages broadcasted  by Charlie during the communication.
If the parties agree on a secret key $w$  on which  the eavesdropper gets no information (even given access to the messages sent by all parties), then 
$
\C(w) \lep \Inf(x:y \mid r, m_C).
$
\end{corollary}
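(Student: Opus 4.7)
The plan is to reduce Corollary~\ref{col:1} to Theorem~\ref{lemma:key<Inf(x:y|s)} by treating Charlie's entire broadcast $m_C$ as the auxiliary side information $s$ available to everyone, including the eavesdropper. Setting $s := m_C$, the desired inequality
$\C(w) \lep \Inf(x:y \mid r, m_C)$
is a direct restatement of the conclusion of Theorem~\ref{lemma:key<Inf(x:y|s)}, so the work is in checking that the hypotheses of the theorem actually apply.

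First I would argue that, once $m_C$ is fixed, the three-party protocol induces a bona fide two-party protocol between Alice and Bob (with $m_C$ as shared side information and $r$ as the public randomness). The point is that each message Alice sends in the original protocol is computable from $x$, $r$, and the messages she has received so far, all of which are drawn from $m_B$ (sent by Bob) and $m_C$ (sent by Charlie). Likewise for Bob. So conditional on the entire string $m_C$, one can define an Alice--Bob communication protocol whose transcript $t$ consists of the interleaved messages $m_A, m_B$, and whose output $w$ coincides with the key produced by the original three-party protocol.

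Next I would verify the secrecy hypothesis of Theorem~\ref{lemma:key<Inf(x:y|s)}: namely that $\Inf(w : s, t, r)\eqp 0$. In the simulated two-party protocol, $s = m_C$ and the transcript $t$ is the concatenation of $m_A$ and $m_B$. The assumption of the corollary says that the eavesdropper, who sees all broadcast messages $m_A, m_B, m_C$ together with $r$, gets no information on $w$; that is exactly $\Inf(w : m_A, m_B, m_C, r)\eqp 0$, which is the needed condition. Applying Theorem~\ref{lemma:key<Inf(x:y|s)} with this choice of $s$ then yields $\C(w) \lep \Inf(x:y \mid r, m_C)$.

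The main obstacle, though a mild one, is making the reduction formally correct: one has to be careful that what is being conditioned on is the \emph{entire} broadcast $m_C$ (including Charlie's later messages, which may depend on things Alice and Bob will send in the future), and to check that the resulting simulated Alice--Bob protocol is causal in the two-party sense. This is handled by the observation above that, once the final $m_C$ is provided as oracle advice, each of Alice's and Bob's messages remains computable from that party's input, $r$, the prefix of the simulated transcript, and $m_C$; no further information from Charlie is ever needed. Once this is spelled out, the corollary follows by direct invocation of Theorem~\ref{lemma:key<Inf(x:y|s)}, with the logarithmic slack absorbed into the $\eqp / \lep$ notation.
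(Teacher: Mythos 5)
Your proposal is correct and follows exactly the paper's own (one-line) proof: apply Theorem~\ref{lemma:key<Inf(x:y|s)} with $s := m_C$. In fact you are more careful than the paper, since you explicitly address the causality issue of conditioning on the \emph{entire} broadcast $m_C$ when simulating the induced Alice--Bob protocol, a point the paper leaves implicit.
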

\begin{proof}
We apply Lemma~\ref{lemma:1}  substituting $m_C$ instead of the public information $s$.
\end{proof}

Now we are ready to prove our main result.

\smallskip
\noindent
\textbf{Theorem~\ref{th:main-special-case} rephrased.} {\it 
Let  Alice, Bob, and Charlie be given $x$,  $y$, and  $z$ respectively such that $(x, y, z)$ is a hyperedge of the hypergraph 
$G = (V_1,V_2,V_3, H)$
from Proposition~\ref{p:hypergraph}
(the pairwise disjoint self-orthogonal directions in a $(k+2)$-dimensional vector space over $\mathbb{F}_{2^n}$).
We consider non-interactive communication protocols where Alice, Bob, and Charlie send messages $m_A $, $m_B $, and $m_C $ respectively
and produce a secret key $w$ with the optimal complexity $\C(w)\eqp 1.5n$. 
Then
$
\C(m_A) \gep (k-1.5)n, \ \C(m_B) \gep (k-1.5)n,\ \C(m_C) \gep (k-1.5)n,
$
and the communication complexity of the protocol is at least  $(3k-4.5)n - O(\log n)$,
which matches the communication complexity of the omniscience protocol.
}
\begin{proof}
To simplify the notation, we ignore the bits $r$ provided by the public source of randomness and  explain the proof for  deterministic protocols. 
Our argument trivially relativizes given any instance of random bits $r$ independent of $(x,y)$ (which is true with a probability close to $1$), 
cf. the full proof of Theorem~\ref{th1}.

From Corollary~\ref{col:1} we know that the size of the key (in our case $1.5n$) cannot be greater than $\Inf(x:y\mid m_C)$.  
By the construction of the tri-expander, $\Inf(x:y) \eqp n$.  
Therefore, the difference between $\Inf(x:y)$ and  $\Inf(x:y\mid m_C)$ is at least $0.5n$. 
\begin{lemma}\label{lemma:ineq2}
For all  binary strings $x,y,z$ it holds 
$
\Inf(x:y\mid s)  - \Inf(x:y) \lep \Inf(s:xy).
$
\end{lemma}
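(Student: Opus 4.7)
The plan is to reduce the claimed inequality to standard non-negativity properties of (conditional) mutual information, using the identities for the triple mutual information displayed in Figure~\ref{x-y-z}. All manipulations go through up to an additive $O(\log n)$, since the Kolmogorov--Levin theorem validates the symmetric definition of $\Inf(x{:}y{:}s)$ and the chain rule for Kolmogorov complexity.

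First, I would rewrite the left-hand side. By the identity
\[
\Inf(x{:}y{:}s) \eqp \Inf(x{:}y) - \Inf(x{:}y\mid s),
\]
the left-hand side equals $-\Inf(x{:}y{:}s)$ (up to a logarithmic term). Then I apply the alternative symmetric expression
\[
\Inf(x{:}y{:}s) \eqp \Inf(x{:}s) - \Inf(x{:}s\mid y),
\]
which gives $\Inf(x{:}y\mid s) - \Inf(x{:}y) \eqp \Inf(x{:}s\mid y) - \Inf(x{:}s)$.

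Next, I use $\Inf(x{:}s) \gep 0$ (valid for Kolmogorov mutual information up to an $O(\log n)$ error) to drop the subtracted term, obtaining the upper bound $\Inf(x{:}s\mid y)$. Finally, the chain rule
\[
\Inf(s{:}xy) \eqp \Inf(s{:}y) + \Inf(s{:}x\mid y),
\]
combined with $\Inf(s{:}y) \gep 0$, yields $\Inf(x{:}s\mid y) \lep \Inf(s{:}xy)$. Chaining the three estimates proves the lemma.

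There is essentially no hard step here: the content of the lemma is just that the triple mutual information cannot be too negative, and this follows from two applications of non-negativity. The only thing to be careful about is keeping track of additive $O(\log n)$ errors, which accumulate over a constant number of steps and therefore remain within the $\lep$ notation used in the statement.
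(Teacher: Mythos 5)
Your proof is correct and is essentially the same as the paper's: both arguments reduce the inequality, via the symmetric identities for $\Inf(x{:}y{:}s)$ and the chain rule, to the non-negativity of the very same two terms $\Inf(s{:}x)$ and $\Inf(s{:}y)$. The only difference is organizational — you chain three estimates where the paper substitutes both expansions at once and cancels.
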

\noindent
(See the proof of the lemma in Appendix~\ref{sec:inequalities}.)
We combine Corollary~\ref{col:1} with Lemma~\ref{lemma:ineq2} and obtain $\Inf(m_C : xy) \gep 0.5n$.

Now we apply Theorem~\ref{thm:1} to the bipartite graph $G_3$ associated with the tri-expander $G$  (see p.~\pageref{p:hypergraph}); here Charlie plays the role of Speaker, and Alice and Bob together play the role of Listener.
Since $\Inf(m_C : xy) \gep 0.5n$, we obtain 
$
\C(m_C) \gep \C(z \mid x,y) + 0.5n \eqp (k-1.5)kn.
$
A~similar argument applies to  $\C(m_A)$ and $\C(m_B) $,  and we are done.
\end{proof}

\section{Secret key agreement: a lower bound for all symmetric  profiles}
\label{sec:crypto-general-case}

\begin{proof}[Proof of Theorem~\ref{th:main}]
If the complexity profile of  $(x,y,z)$ is symmetric then it can be specified by a triple of parameters $\alpha,\beta,\gamma$,
\begin{equation}\label{eq:sym-profile-greek-letters}
\left\{
\begin{array}{l}
\C(x\mid y,z) \eqp \C(y\mid x,z) \eqp \C(z\mid x,y) \eqp \alpha ,\\
\Inf(x:y\mid z) \eqp \Inf(x:z\mid y) \eqp \Inf(y:z\mid x) \eqp \beta ,\ 
\Inf(x:y:z) \eqp \gamma .
\end{array}
\right.
\end{equation} 
In Theorem~\ref{th:main-special-case} we proved that communication complexity \eqref{eq-cc-3} of the omniscience protocol is optimal in case 
$\alpha = (k-2)n$,  $\beta = n$, and $\gamma = 0$.  
We  reduce the problem with  arbitrary $\alpha,\beta,\gamma$ to the special case settled in  Theorem~\ref{th:main-special-case}. 
We split this reduction into three steps, as shown in the following lemma.
\begin{lemma} 
\label{l:reduction}
If  communication complexity \eqref{eq-cc-3}  is optimal (in the worst case) for some triples of inputs $(x,y,z)$ with complexity profile \eqref{eq:sym-profile}
then 

\begin{itemize}
\item
(a) for every positive $\delta\le n $, communication of the omniscience protocol is also optimal (also in the worst case) for some triples of inputs $(x',y',z')$ with complexity profile
\begin{equation}\label{eq:reduction-a}
\left\{
\begin{array}{l}
\C(x'\mid y',z') \eqp \C(y'\mid x',z') \eqp \C(z'\mid x',y') \eqp \alpha-\delta ,\\
\Inf(x':y'\mid z') \eqp \Inf(x':z'\mid y') \eqp \Inf(y':z'\mid x') \eqp \beta ,\ 
\Inf('x:y':z) \eqp \gamma,
\end{array}
\right.
\end{equation}
\item
(b) for every positive $\delta$, communication of the omniscience protocol is also optimal for some triples of inputs $(x',y',z')$ with complexity profile
\begin{equation}\label{eq:reduction-b}
\left\{
\begin{array}{l}
\C(x'\mid y',z') \eqp \C(y'\mid x',z') \eqp \C(z'\mid x',y') \eqp \alpha ,\\
\Inf(x':y'\mid z') \eqp \Inf(x':z'\mid y') \eqp \Inf(y':z'\mid x') \eqp \beta ,\ 
\Inf(x':y':z) \eqp \gamma+\delta.
\end{array}
\right.
\end{equation}
\end{itemize}
\vspace{-0.5em}
Let us consider the special case $\alpha \eqp (k-2)n$, $\beta \eqp n$, $\gamma\eqp0$
(as in Theorem~\ref{th:main-special-case}). Then 
\begin{itemize}
\item 
(c)   for every positive $\delta\le \beta/2 $, communication of the omniscience protocol is also optimal for some triples of inputs $(x',y',z')$ with complexity profile
\begin{equation}\label{eq:reduction-c}
\left\{
\begin{array}{l}
\C(x'\mid y',z') \eqp \C(y'\mid x',z') \eqp \C(z'\mid x',y') \eqp \alpha ,\\
\Inf(x':y'\mid z') \eqp \Inf(x':z'\mid y') \eqp \Inf(y':z'\mid x') \eqp \beta +\delta,\ 
\Inf('x:y':z) \eqp  -3\delta ;
\end{array}
\right.
\end{equation}
\vspace{-0.5em}
\end{itemize}
\end{lemma}
In Lemma~\ref{l:reduction} we show that the existence of a ``too efficient'' protocol for \eqref{eq:reduction-b}, \eqref{eq:reduction-c}, \eqref{eq:reduction-a} would imply a ``too efficient protocol'' for \eqref{eq:main-profile}, which is impossible due to Theorem~\ref{th:main-special-case}. 
The proof  is based on repeated application of Muchnik's theorem on conditional descriptions (\cite{muchnik-coding}), 
which basically claims that for all strings $a,b_1,\ldots,b_\ell$ and for every number $m\le \C(a)$ 
there exists a ``digital fingerprint'' of $a$ of length $m$ that looks maximally random conditional on each $b_j$. 
Technically, this means that for some $a'$ we have
\[
\C(a') \eqp m,\ \C(a' \mid a) \eqp 0, \ \text{and} \ \C(a' \mid b_j) = \min\{ \C(a'\mid b_j), m\} \ \text{for}\ j=1,\ldots,\ell.
\]
The proof of this lemma uses mostly techniques of Kolmogorov complexity that are not specific for communication problems, 
see Appendix~\ref{sec:reduction}.

It is not hard to verify that starting with a triple $(x,y,z)$ from Theorem~\ref{th:main-special-case} and then applying the reductions from Lemma~\ref{l:reduction}, we can obtain any realizable profiles \eqref{eq:sym-profile}. 
Indeed, we begin with a triple of pairwise orthogonal directions $(x,y,z)$ with $\alpha = (k-2)n, \beta = n, \gamma=0$ for a suitable $n$ and  $k$, then apply Lemma~\ref{l:reduction}~(b) or Lemma~\ref{l:reduction}~(c) to get a triple $(x',y',z')$ with a suitable  $I(x':y':z')$ (case~(b) serves to make the triple mutual information positive, and case~(c) is needed if we want to make it negative), and further   apply Lemma~\ref{l:reduction}~(a) to trim the value of $\alpha$. 

Thus, Theorem~\ref{th:main-special-case} implies optimality of  \eqref{eq-cc-3} 
not only for triples with a pretty specific complexity profile but 
for triples of inputs  $(x,y,z)$ with arbitrary symmetric complexity profile  \eqref{eq:sym-profile}. 
\end{proof}

\section{Upper bound for interactive protocols}\label{sec:interactive}

In this section we show that the communication complexity  \eqref{eq-cc-3}  is not optimal for multi-round protocols where the parties can actually interact with each other.

\begin{proposition}\label{thm:5}
In the setting of Theorem~\ref{th:main-special-case} there is a \emph{multi-round}  communication protocol  \textup(not a \emph{simultaneous messages} protocol\textup) with communication complexity 
$
(2k-2.5)n + O(\log n),
$
where the parties agree on a secret key of the optimal size 
$1.5n-O(\log n)$.
\label{th:upper-bound}
\end{proposition}

\begin{proof}[Sketch of  proof of Proposition~\ref{thm:5}.]
We adapt the \emph{omniscience} protocol from \cite{jacm2019}. 
In what follows we assume that random hash-functions are chosen with the public source of randomness
(e.g., one may assume that random hashing is the multiplication by a randomly chosen matrix).

In the first round, Alice and Bob send messages $m_A= m_A(x,r)$ and  $m_B= m_B(y,r)$
(random hash-values of $x$ and $y$), each of length $(k-1.5)n+ O(\log n)$
such that Charlie given $(m_A, m_B, z)$ can reconstruct the pair $(x,y)$. 
Then Charlie sends a message $m_C$ that is another random hash-value of $(x,y)$
of length $0.5 n+ O(\log n)$.

With a high probability (for a randomly chosen hash-function), 
the values  $m_B$ and $m_C$ are enough for Alice to reconstruct $y$, and 
the values  $m_A$ and $m_C$ are enough for Bob to reconstruct $x$. 
Thus, at the end of communication, with high probability each party knows $(x,y)$. 
At the same time, the adversary learns from the communication at most $|m_A| + |m_B| + |m_C| = (2k-2.5)n + O(\log n)$ bits of information.

Now each party applies to $(x,y)$ another (independently chosen) random hash function and obtains a hash-value $w=\mathrm{hash}(x,y)$ of length 
\[
\C(x,y) - |m_A| - |m_B| - |m_C| - O(\log n) = 1.5n - O(\log n).
\]
With a high probability the obtained $w$ is incompressible conditional on the data accessible to the eavesdropper (the messages of the parties and the public random bits).
\end{proof}
\begin{remark}
In the omniscience protocol, 
we may define random hashing as random linear mappings, i.e., each hash-values can be computed as the product over $\mathbb{F}_2$ of a bit vector by a randomly chosen binary matrix  of the appropriate dimension. 
These matrices can be  made publicly known: we can obtain these random matrices from the public source of random bits, 
and this does not reveal any information about the secret key to the adversary.
Using more sophisticated  constructions of hash-functions, we could reduce the number of used random bits 
(although this improvement is not necessary to prove Theorem~\ref{thm:5} in the model with a public source of randomness).
\end{remark}

\section{Conclusion and open problems}

We proved that the standard omniscience protocol provides the optimal worst-case communication complexity of the problem of secret key agreement  (with three parties) in the class of protocols with simultaneous messages.
A general open problem is to study the limits of our approach.
In particular, for the class of multi-round communication protocols,  the value \eqref{eq-cc-3} is no longer the optimal communication complexity of secret key agreement (Theorem~\ref{th:upper-bound}).
Our technique implies \emph{some} lower bounds for communication complexity of interactive protocols, but it does not match the known upper bounds. 
Thus, a natural open problem is to settle the communication complexity of multi-party secret key agreement  for  multi-round protocols.
It would be also interesting to extend our results to the communication model with private sources of randomness.
Another open problem is to get rid of Lemma~\ref{l:reduction} and find a more direct proof of Theorem~\ref{th:main} with a more flexible construction of a tri-expander.

\appendix

\section{Preliminaries: Kolmogorov complexity in some more detail}
\label{sec:prelim}

Let $M$ be a Turing Machine with two input tapes and one output tape. 
We say that $p$ is a program that prints a string $x$ given $y$ (a description of $x$ conditional on $y$)  if $M$ prints $x$ on the pair of inputs $(p, y)$.  
\emph{Kolmogorov complexity} of $x$ \emph{conditional on} $y$ relative to $M$ is defined as 
\[
\C_M(x\mid y)=\min\{|p|:M(p,y)=x\}.
\]
The invariance theorem  (see \cite{kolmogorov}) claims that there exists an \emph{optimal} Turing machine $U$ such that for every other Turing machine $V$ there is a number $c_V$ such that for all $x$ and $y$
\[
\C_U(x\mid y)\le \C_V(x\mid y)+c_V.
\]
Thus, the algorithmic complexity of $x$ relative to $U$ is minimal up to an additive constant. 
In the rest of the paper we fix an optimal machine $U$, omit the subscript $U$ and define  \emph{Kolmogorov complexity} of $x$ conditional on $y$ as
\[
\C(x\mid y):=\C_U(x\mid y).
\]
Kolmogorov complexity $\C(x)$ of a string $x$ (without a condition) is defined as the Kolmogorov complexity of $x$ conditional on the empty string. 
We fix an arbitrary computable bijection between binary strings and all finite tuples of binary strings and define  Kolmogorov complexity of a tuple $\langle x_1,\ldots,x_k\rangle$ as Kolmogorov complexity of the code of this tuple. 
For brevity we denote this complexity by $\C(x_1,\ldots,x_k)$. 
Similarly, we can fix a bijection between binary strings and elements of  finite fields, polynomials over finite fields,  directions in vector  spaces over finite fields, etc., and talk about Kolmogorov complexities of these objects (implying Kolmogorov complexity of their codes).
We use the conventional notation
\[
\Inf(x:y) := \C(x) +\C(y) - \C (x,y)
\text{ and }
\Inf(x:y \mid z) := \C(x\mid z) + \C(y\mid z) - \C (x,y\mid z)
\]
(mutual  information and conditional mutual information for a pair) and 
\[
\Inf(x:y:z) := \Inf(x:y) - \Inf(x:y\mid z) 
\]
(the triple mutual information). 
The Kolmogorov--Levin theorem, \cite{zvonkin-levin}, claims that for  all $x,y$
\[
\C(x,y) \eqp \C(x\mid y) + \C(y).
\]
 Using the Kolmogorov--Levin theorem it is not hard  to show that
\[
 \Inf(x:y:z)  \eqp  \C(x) + \C(y) + \C(z) - \C(x,y) - \C(x,z) - \C(y,z) + \C(x,y,z)
\]
and, therefore, 
\[
 \Inf(x:y:z) \eqp  \Inf(x:z) - \Inf(x:z\mid y)  \eqp  \Inf(y:z) - \Inf(y:z\mid x). 
\]
These relations can be observed on a Venn-like diagram, see  Fig.~\ref{x-y-z}.

\smallskip

A string $x$ is said to be (almost) \emph{incompressible} given $y$ if
$
C(x\mid y) \gep |x| ,
$
and $x$ and $y$ are said to be \emph{independent}, if $\Inf(x:y)\eqp0$. 
For every $n$, the majority of  binary strings of length $n$ are almost incompressible; the vast majority of pairs of strings $x$ and $y$ of length $n$ are independent.

\smallskip

			\begin{figure}[t]
				\centering
				\begin{tikzpicture}[scale=0.85]
				  \draw \firstcircle node[above left] {\small $\C(x|y,z)$};
				  \draw \secondcircle node [above right] {\small $\C(y|x,z)$};
				  \draw \thirdcircle node [below] {\small $\C(z|x,y)$};
				  \node at (95:0.30)   {\small $\Inf(x:y:z)$};
				  \node at (90:1.55) {\small $ \Inf(x:y|z)$};
				  \node at (210:1.75) {\small $\Inf(x:z|y)$};
				  \node at (330:1.75) {\small $\Inf(y:z|x)$};
				  \node at (150:4.75) {\Large $x$};
				  \node at (30:4.75) {\Large $y$};
				  \node at (270:4.75) {\Large $z$};
				\end{tikzpicture}
				\caption{Complexity profile for a triple $x,y,z$. 
				On this diagram it is easy to observe several standard equations: 
				\newline
				$\bullet$ $\C(x) \eqp \C(x\mid y,z) +  \Inf(x:y\mid z) +  \Inf(x:z\mid y) + \Inf(x:y:z)$ 
				%
				%
				\newline
				$\bullet$ $\C(x,y) \eqp \C(x\mid y,z) + \C(y\mid x,z) +  \Inf(x:y\mid z) +  \Inf(x:z\mid y)  +  \Inf(y:z\mid x) + \Inf(x:y:z)$ 
				%
				\newline
				$\bullet$ $\C(x\mid y) \eqp \C(x\mid y,z) +  \Inf(x:z\mid y) $ 
				\newline
				$\bullet$ $\Inf(x:y) \eqp \Inf(x:y\mid z) + \Inf(x:y:z)$ 
				\newline
				$\bullet$ $\Inf(x:yz) \eqp \Inf(x:y\mid z) + \Inf(x:z\mid y) + \Inf(x:y:z)$ 
				\newline
				and so on; all these equations are valid up to $O(\log(|x|+|y|+|z|))$.
				}\label{x-y-z}
			\end{figure}

For a pair of strings $(x,y)$ we call by its \emph{complexity profile} the triple of numbers $(\C(x), \C(y), \C(x,y))$. 
Due to the Kolmogorov--Levin theorem, the complexity profile of a pair is determined
(up to additive error terms $O(\log (|x|+|y|))$)  by the triple of numbers $(\C(x\mid y), \C(y\mid x), \Inf(x:y))$. 
Indeed, 
\[
\C(x) \eqp \C(x\mid y) + \Inf(x:y),\ 
C(y) \eqp \C(y\mid x) + \Inf(x:y),\
C(x,y) \eqp \C(x\mid y) + \C(y\mid x) + \Inf(x:y).
\]
Similarly, for a triple of strings $(x,y,z)$ we define its  \emph{complexity profile} as the vector with $7$ components
\[
 (\C(x), \C(y), \C(z), \C(x,y), \C(x,z), \C(y,z), \C(x,y,z)  ).
\]
This profile can be equivalently specified (again, up to  additive logarithmic  error terms) by the numbers 
\[
\C(x\mid y,z), \C(y\mid x,z), \C(z\mid x,y), \Inf(x:y\mid z),  \Inf(x:z\mid y),  \Inf(y:z\mid x),  \Inf(x:y:z),
\]
see   Fig.~\ref{x-y-z}. 

In general, for an $n$-tuple of string $(x_1,\ldots, x_n)$, its complexity profile is the vector that consists of $2^n-1$ components 
$\C(x_{i_1},\ldots, x_{i_s})$ for all non-empty tuples $1\le i_1<\ldots < i_s\le n$.
In Fig.~\ref{fig:profiles} we show diagrams illustrating the complexity profiles for Examples~\ref{example:1}-\ref{example:line-point} and Proposition~\ref{p:hypergraph}

For a survey of the basic  properties of Kolmogorov complexity we refer the reader to the introductory chapters in \cite{li-vitanyi} and \cite{shen-vereshchagin}.

\section{Bound of the spectral gap for the tri-expander}\label{sec:spectral-gap}

In the proof of  Lemma~\ref{l:tri-expander} we use the fact that the bipartite graphs associated with the hypergraph from Proposition~\ref{p:hypergraph} are highly symmetric, 
and these symmetries  simplify the computation of the eigenvalues.
To guarantee this property, we have imposed restrictions that may seem artificial: the characteristic of the field is $2$, we take into consideration only self-orthogonal vectors, and the direction $(1,\ldots,1)$ is not included  in the set $V$.

\begin{proof}[Proof of Lemma~\ref{l:tri-expander}]
By construction, the graphs $G_1$, $G_2$, $G_3$ are isomorphic. 
So  we only need to compute the eigenvalues of $G_1$.
Let us begin with the case when $k$ is an odd number, and the vector $(1,1,\ldots,1)$ is not self-orthogonal.

In the bipartite graph $G_1 = (L,R,E)$ the left part of vertices $L$ coincides with the space of all self-orthogonal directions $V$, the right part $R$ consists of pairs of self-orthogonal directions  
$(y,z)$ that are mutually orthogonal and $y\not=z$,
and the set of hyperedges $H$ consists of the triples $(x,y,z)$ of self-orthogonal directions that are pairwise distinct and mutually orthogonal.

To compute $\#L $, we count the number of directions in ${\cal L}_{so}$. To this end,  we divide the total number of non-zero vectors in the $(k+1)$-dimensional space ${\cal L}_{so}$ by the number of vectors in each equivalence class (the number of non-zero elements in the field), which gives $\# L = \Theta(2^{kn})$.

For each $y\not=(1,\ldots,1)$, the space of vectors  $z\in {\cal L}_{so}$ that are orthogonal to $y$ is a subspace of co-dimension $1$ in ${\cal L}_{so}$. 
To count the number of directions in this subspace, we need again divide the total number of non-zero vectors  by the number of non-zero elements in the field.
We obtain that $\#R = \Theta(2^{kn}) \cdot \Theta(2^{(k-1)n})$.

To find $D_R$,  we count the number of directions in ${\cal L}_{so}$ that are orthogonal to two directions $y,z$ (that cannot coincide with $(1,\ldots, 1)$), i.e., the directions in a subspace of co-dimension $2$, which gives
$D_R = \Theta(2^{(k-2)n})$. Similarly, we obtain $D_L =  \Theta(2^{(k-1)n}) \cdot \Theta(2^{(k-2)n})$.

Observe that $\# H  = \#L \cdot D_L = \#R \cdot D_R =  \Theta(2^{kn}) \cdot \Theta(2^{(k-1)n}) \cdot \Theta(2^{(k-2)n})$.

Now we can compute the eigenvalues. 
We denote
$
M =\left(
\begin{array}{cc}
0 & A\\
A^\top &0
\end{array}
\right)
$
the adjacency matrix of the graph. 
We estimate the eigenvalues of $A\cdot A^\top $, which is the matrix of paths of length $2$ in the graph,  starting and finishing in $L$. 
Starting at some $x\in L$,  we can go to some $(y,z)\in R$ and then either come back to the same $x$, or to end up in a different $x'\in L$. 
For a fixed $x$, the number of paths 
\[
x \to (y,z) \to x
\]
is equal to $D_L$ (any $(y,z) $ matching $x$  serves as the middle point of the path). 
For $x\not= x'$,  the number of paths
\[
x \to (y,z)  \to x'
\]
is equal to the number of mutually orthogonal  pairs  $(y,z) $ that are both orthogonal to $x$ and $x'$, and all four directions $x, x', y, z$ are distinct. 
This is similar to the computation of $D_L$ but the co-dimensions are incremented: we have  $ \Theta(2^{(k-2)n}) \cdot \Theta(2^{(k-3)n})$ such pairs.

We denote by $I$ the identity matrix and  by $J$ the matrix of all ones. 
Those matrices are both symmetric.  We get
\[
\begin{array}{rcl}
A\cdot A^\top &=& D_L \ \cdot \ I  + \Theta(2^{(k-2)n} \cdot 2^{(k-3)n})\  \cdot\ (J-I)  \\
&=&   \Theta(2^{(k-1)n}  \cdot 2^{(k-2)n} ) \cdot I  + \Theta(2^{(k-2)n} \cdot 2^{(k-3)n}) \ \cdot \   J.
\end{array}
\]
Observe that $I$ and $J$ have a common basis of eigenvectors.
Indeed, for the matrix $I$ all vectors in the space are eigenvectors (and the only eigenvalue is $1$ with multiplicity $\# L$). 
The eigenvalues of $J$ are the number $\#L$ (of multiplicity $1$) and  $0$  (of multiplicity $\#L-1$).
Therefore, the eigenvectors of $A\cdot A^\top$ are
\[
\begin{array}{l}
\lambda_1 = \Theta(2^{(k-2)n} \cdot 2^{(k-3)n})  \cdot \#L = \Theta(2^{ k n} \cdot 2^{(k-2)n} \cdot 2^{(k-3)n}) = \Theta(2^{(3k-5)n})\\
\lambda_2 =\ldots = \lambda_{\# L} =  \Theta(2^{(k-1)n}  \cdot 2^{(k-2)n} )  = \Theta(2^{(2k-3)n}) = \Theta(D_L).
\end{array}
\]
(Observe that $\lambda_1$ can be found directly as $D_L\cdot D_R$.)
The eigenvalues of $M$ are the square roots of those of $A \cdot A^\top$.

If $k$ is even, the computation is similar, but on each step we should subtract from the set of self-orthogonal directions the vectors collinear with $(1,\ldots,1)$.
\end{proof}

\section{Useful information inequalities}\label{sec:inequalities}

\begin{proof}[Proof of Lemma~\ref{lemma:ineq2}]
We need to prove that 
\begin{equation}\label{eq:4}
\Inf(x:y\mid s)  \lep \Inf(x:y) + \Inf(s:xy).
\end{equation}
Observe that 
\[
\Inf(x:y)  \eqp \Inf(x:y\mid s) + \Inf(x:y:s)  
\]
and 
\[
\Inf(s:xy) \eqp \Inf(s:x) + \Inf(s:y) - \Inf(x:y:s).
\]
Therefore, \eqref{eq:4} rewrites to 
\[
I (x:y\mid s) \lep   \Inf(x:y\mid s) + \Inf(x:y:s)   +  \Inf(s:x) + \Inf(s:y) - \Inf(x:y:s).
\]
This inequality is always true since the terms $\Inf(s:x)$ and $\Inf(s:y)$ are non-negative.
\end{proof}

\begin{lemma}\label{lemma:appendix1}
 For all $x,y,x',y',r$

\noindent
(i) $
 \Inf(x':y \mid z) \lep  \Inf(x : y \mid z)  + \C(x' \mid x).
$

\noindent
(ii) $
 \Inf(x':y' \mid z) \lep  \Inf(x : y \mid z)  + \C(x' \mid x) + \C(y'\mid y).
$

\noindent
(iii) $
 \Inf(x':y' \mid z,r) \lep  \Inf(x : y \mid z,r)  + \C(x' \mid x,r) + \C(y'\mid y,r).
$

\end{lemma}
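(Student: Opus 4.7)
The plan is to derive part (i) directly from the chain rule for conditional mutual information, and then obtain part (ii) by applying (i) twice, once in each coordinate.

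For (i), the starting point is the chain rule
\[
\Inf(x,x':y \mid z) \eqp \Inf(x:y\mid z) + \Inf(x':y\mid x,z),
\]
which is a direct consequence of the Kolmogorov--Levin theorem applied to $\C(x,x'\mid z)$ and $\C(x,x'\mid y,z)$. Next I would use two elementary monotonicity facts (both up to $O(\log n)$): first, adding $x$ to the left-hand bundle can only increase the mutual information, so $\Inf(x':y\mid z) \lep \Inf(x,x':y\mid z)$; and second, any conditional mutual information is bounded by any of its conditional complexities, so $\Inf(x':y\mid x,z) \le \C(x'\mid x,z) \le \C(x'\mid x)$. Plugging these two estimates into the chain rule identity immediately yields part (i).

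For (ii), I would simply iterate part (i). By the symmetry $\Inf(a:b\mid c)\eqp \Inf(b:a\mid c)$, the inequality in (i) also holds when we replace the \emph{second} argument, i.e.
\[
\Inf(x':y'\mid z) \lep \Inf(x':y\mid z) + \C(y'\mid y).
\]
Applying (i) in its original form to bound the first term on the right gives
\[
\Inf(x':y\mid z) \lep \Inf(x:y\mid z) + \C(x'\mid x),
\]
and adding the two estimates proves (ii). Because each application of (i) loses at most $O(\log n)$, the final inequality also holds up to $O(\log n)$.

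I expect no serious obstacle: the only mild subtlety is to make sure the monotonicity $\Inf(x':y\mid z)\lep \Inf(x,x':y\mid z)$ is handled cleanly, which comes down to checking that $\C(x\mid x',z)\ge \C(x\mid x',y,z)$ up to logarithmic error — a standard consequence of the fact that extra conditions can only decrease conditional complexity. Everything else is a careful but routine bookkeeping of logarithmic error terms.
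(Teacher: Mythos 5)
Your proposal is correct and follows essentially the same route as the paper: both rest on the chain rule $\Inf(x,x':y\mid z)\eqp \Inf(x':y\mid z)+\Inf(x:y\mid x',z)\eqp \Inf(x:y\mid z)+\Inf(x':y\mid x,z)$ (your ``monotonicity'' step is exactly the non-negativity of $\Inf(x:y\mid x',z)$ used in the paper), followed by the bound $\Inf(x':y\mid x,z)\lep\C(x'\mid x,z)\lep\C(x'\mid x)$, and part (ii) is obtained in both cases by applying (i) once in each coordinate.
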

\begin{proof}
To prove (i), we observe that by the chain rule for the mutual information 
\[
\Inf(x,x':y \mid z) \eqp \Inf(x:y\mid z) +   \Inf(x':y\mid x,z) \eqp  \Inf(x':y\mid z) +   \Inf(x:y\mid x',z) .
\]
Therefore,
\[
\begin{array}{rcl}
\Inf(x':y \mid z) &\eqp & \Inf(x:y \mid z) + \Inf(x':y \mid x,z) - \Inf(x:y|x',z)\\
 &\lep & \Inf(x:y \mid z) + \Inf(x':y \mid x,z) \\
&\lep &  \Inf(x:y \mid z)  + \C(x'\mid x,z) \\
&\lep &  \Inf(x:y \mid z)  + \C(x'\mid x) .
\end{array}
\]
To prove (ii) we apply the same argument twice (at first we replace $x'$ by $x$ and then replace $y'$ by $y$),
\[
 \Inf(x':y' \mid z) \lep  \Inf(x : y' \mid z)  + \C(x' \mid x)  \lep  \Inf(x : y \mid z)  + \C(x' \mid x)   + \C(y' \mid y).  
\]
The proof of (iii) is a ``relativized'' version of the proof of (ii); we only need to add $r$ to the condition in all term of Kolmogorov complexity in the argument.
\end{proof}

\section{Lemma on relativization}\label{sec:relativize}

Kolmogorov complexity can be relativized: for any oracle $\cal O$, we may define Kolmogorov complexity $\C^{\cal O}(x)$, $\C^{\cal O}(x\mid y)$ in terms of a universal decompressor that can access $\cal O$. 
In case when $\cal O$ is represented by a finite string $s$, the relativization has a simple meaning: $\C^{\cal O}(x) = \C(x\mid s)+O(1)$ and $\C^{\cal O}(x\mid y) = \C(x\mid y, s)$.
Having fixed an oracle, we can pose the problem of secret key agreement in terms of the relativized Kolmogorov complexity.

The lower bounds on communication complexity of secret key agreement that we have discussed followed a pretty constructive scheme: 
we defined a set of input data sets $H$ such that 
\begin{itemize}
\item most triples $(x,y,z)\in H$ have complexity profile close to the required parameters, and
\item we show that any secret key agreement that succeeds on most $(x,y,z)\in H$ must have large communication complexity.  
\end{itemize}
This type of argument easily relativizes. Indeed, we can show that an oracle contains negligible information on  most $(x,y,z)\in H$; 
therefore, the relativization does not change significantly  the complexity profile for most triples in $H$.
It follows that a successful secret key agreement scheme for triples from $H$ in the sense of the relativized Kolmogorov complexity can be used as  a secret key agreement scheme  in the sense of the standard (non-relativized) Kolmogorov complexity.
Thus, if we had a lower bound on the communication complexity of a successful protocol in the non-relativized setting, substantially the same bound applies to the relativized version of the problem.
In other words, \emph{relativization cannot make the problem of secret key agreement easier} (cannot reduce communication complexity).

However, this argument does not imply that  relativization cannot make the communication complexity of the problem \emph{harder}.
Indeed,  the relativization conditional on an oracle $\cal O$ can provide us with completely  new tuples $(x,y,z)$ with the required complexity profile. 
These new input data sets  may have unusual combinatorial properties, and \emph{a priori} they might require a longer communication to agree on a secret key.
In what follows we show that this is not the case: for a fixed complexity profile, relativization cannot increase communication complexity of secret key agreement.

\bigskip

In this section we use the technique of ``clones'' that was developed in \cite{hammer,romashchenko-pairs,muchnik-romashchenko}. 
We need to recall several definitions.
For a tuple $(x_1,\ldots, x_n)$, its \emph{complexity profile} is the vector of $2^n-1$ values of Kolmogorov complexities $\C(x_{i_1},\ldots, x_{i_k})$ for all $1\le i_1<\ldots i_k \le n$;
the \emph{extended complexity profile} includes (besides the same  $2^n-1$ values of unconditional complexities) the vector of conditional complexities $\C(x_{i_1},\ldots, x_{i_k} \mid x_{j_1},\ldots, x_{j_s} )$ 
for all disjoint sets of indices $1\le i_1<\ldots i_k \le n$ and $1\le j_1<\ldots j_s \le n$. 

For a tuple $(x_1,\ldots, x_n)$, the set of its \emph{clones} denoted $\text{Clone}(x_1,\ldots,x_n)$ is defined as the set of all $(x_1',\ldots, x_n')$ such that the extended complexity profile of $(x_1',\ldots, x_n')$ is component-wise not greater than the extended complexity profile of $(x_1,\ldots, x_n)$. It is known (see \cite{romashchenko-pairs,muchnik-romashchenko}) that
\begin{itemize}
\item \ \textbf{substantiality:} $\log \left(\# \text{Clone}(x_1,\ldots,x_n)\right) \eqp \C(x_1,\ldots,x_n)$, and 
\item \ \textbf{uniformity:} if the set of all indices is split into two parts, 
\[
\{1,\ldots,n\} = \{i_1,\ldots,i_k\} \sqcup \{j_1,\ldots, j_s\},
\]
then for every set of strings $x'_{j_1},\ldots,x'_{j_s}$, there are at most $2^{\C(x_{i_1},\ldots, x_{i_k} \mid x_{j_1},\ldots, x_{j_s} ) +1}$ 
tuples $x'_{i_1},\ldots,x'_{i_k}$ such that the $n$-tuple combined of $x'_{j_1},\ldots,x'_{j_s}$ and $x'_{i_1},\ldots,x'_{i_k}$ belongs to $\text{Clone}(x_1,\ldots,x_n)$.
\end{itemize}
Similarly, we can define the extended complexity profile and the set of clones using Kolmogorov complexity relativized conditional on an oracle. 
For the relativized clones we have the same properties of  substantiality and uniformity.

\begin{lemma}\label{l:protocol-relativization}
Let $\pi$ be a communication protocol of secret key agreement for three participants (with inputs of length $n$), with public randomness ($m=O(n)$ public random bits).
Assume that there exists an oracle $\cal O$ and a complexity profile $\bar p \in \mathbb{N}^7$ such that for some
triple of inputs $(x,y,z)$ with
\[
(\C^{\cal O}(x), \C^{\cal O}(y), \C^{\cal O}(z), \C^{\cal O}(x,y), \C^{\cal O}(x,z), \C^{\cal O}(y,z), \C^{\cal O}(x,y,z)) = \bar p
\]
the protocol $\pi$ fails  (with a probability at least $1/2$) to obtain a common secret key. 
The secrecy failure means that given oracle $\cal O$,  Kolmogorov complexity of the key conditional on the transcript $\text{\rm transcript}_\pi(x,y,z,r)$  and the string $r$ sampled by the public source of random bits is below some threshold $\ell$,
\[
\C(\text{\rm produced key} \mid r, \text{\rm transcript}_\pi(x,y,z,r)) < \ell.
\]
Then there exists a triple of inputs $(x',y',z')$ whose non-relativized complexity profile 
\[
\C(x'), \C(y'), \C(z'), \C(x',y'), \C(x',z'), \C(y',z'), \C(x',y',z')
\]
 is component-wise $O(\log n)$-close to $\bar p$, and the protocol  $\pi$ fails  (also with a probability at least $1/2$) to obtain a common secret key 
 (here the secrecy is understood without relativization, with a threshold $\ell' = \ell+O(\log n)$).
 \end{lemma}
\begin{proof} 
We begin the proof with some notation.
Let $(x,y,z)$ be a triple of inputs from the statement of the theorem. 
We denote by $\bar q = \bar q(x,y,z)$ the extended complexity profile of this triple. 
For each triple of inputs $(x,y,z)$ and for each $m$-bit string $r$ we denote by 
\[
\text{transcript}_\pi(x,y,z,r)
\] 
the transcript of the protocol applied to these inputs with public random bits $r$ and by 
\[
\text{result}_\pi(x,y,z,r)
\]
the key produced by the parties (if the parties fail to  agree on a common key, we let  $w$ be the empty word).

If $\pi$ fails  (in the sense of the relativized Kolmogorov complexity) on some triple of inputs $(x,y,z) \in \{0,1\}^n\times \{0,1\}^n \times \{0,1\}^n$, it means that for a half of all bit strings $r\in \{0,1\}^m$, 
when the protocol is applied to these inputs,
\begin{itemize}
\item[(i)] either Alice, Bob, and Charlie fail to  agree on a common  key, 
\item[(ii)] or they do agree on one and the same key but this key is not secret, i.e., its relativized complexity is below the threshold,  
$
\C^{\cal O}(\text{result}_\pi(x,y,z,r) \mid \text{transcript}_\pi(x,y,z,r), r) <\ell.
$
\end{itemize}

We denote by $F \subset  \{0,1\}^n\times \{0,1\}^n \times \{0,1\}^n $ the set of all triples of inputs $(x,y,z)$ on which $\pi$ fails, and 
$
F':= F\cap {\rm Clone}(\bar q),
$
i.e., the set of all failure inputs that have extended complexity profile component-wise below~$\bar q$. 

We say that a tuple of strings
\begin{equation}
\label{eq:certificate}
(x,y,z,r,\text{transcript}_\pi(x,y,z,r),\text{result}_\pi(x,y,z,r)) 
\end{equation}
is a \emph{positive certificate}  for $(x,y,z)$ if all three  parties agree on the same key and
\begin{equation}
\nonumber
\C^{\cal O}(\text{result}_\pi(x,y,z,r) \mid r, \text{transcript}_\pi(x,y,z,r)) \ge \ell.
\end{equation}
Otherwise, this tuple is called a \emph{negative certificate} for $(x,y,z)$. 
By definition, $\pi$ fails on $(x,y,z)$ if  for a half of all bit strings $r\in \{0,1\}^m$, the tuple \eqref{eq:certificate} is a negative certificate.

We say that $w$ is \emph{compatible} with $(r,t)$, if there exists at least one  negative certificate $(x,y,z,r,t,w)$. 
Observe that for each $r$ and $t$ there are less than $2^{\ell}$ strings $w$ such that 
\[
\C^{\cal O}(w \mid r, t) < \ell.
\]
Therefore, for each $(r,t)$ there are less than $2^\ell$ strings $w$ compatible with $(r,t)$.

We know that $F'$ is not empty (by the condition of the lemma, at least one triple $(x,y,z)$ causes a failure of $\pi$). Observe that 
\begin{itemize}
\item $F'$ inherits the property of  \emph{uniformity} from $\text{Clone}(x,y,z)$,
\item we can enumerate the set $F'$ given access to $\cal O$,
\item since $(x,y,z)$ can be specified by its ordinal number in this enumeration,
\[
\C^{\cal O}(x,y,z) \lep \log \#F' 
\]
or, equivalently,
$
\# F' \ge 2^{\C^{\cal O}(x,y,z) - O(\log n)}.
$
\end{itemize}
Now define a combinatorial structure that looks ``similar'' to the triple of sets
\begin{equation}
\label{eq:structure}
(F', \text{positive certificates for }F', \text{negative certificates for }F'),
\end{equation}
but the oracle $\cal O$ is not involved in the definition.
Having fixed an arbitrary set  of tuples $G\subset  \{0,1\}^n\times \{0,1\}^n \times \{0,1\}^n$ and the protocol $\pi$, we can compute the set $S(G)$ of all tuples \eqref{eq:certificate} for every $(x,y,z)\in G$. 
We consider possible splits of $S$ into two classes,
$
S = S_{+} \sqcup S_{-}
$
and say that $(r,t)$ is $G$-compatible with $w$, if there is at least one  tuple $(x,y,z,r,t,w)\in S_-$ (for some $(x,y,z)\in G$).
Such a split is \emph{valid} if 
\begin{itemize}
\item[(a)] for each $(x,y,z) \in G$ and for a half of $r\in \{0,1\}^m$, the tuple 
\[
(x,y,z,r,\text{transcript}_\pi(x,y,z,r),\text{result}_\pi(x,y,z,r))
\] 
belongs to $S_-$, and
\item[(b)] for each $(r,t)$ there are less than $2^\ell$ strings $w$ that is $G$-compatible with $(r,t)$.
\end{itemize}

We say that  a valid split  $(G, S_+, S_-)$ is \emph{similar} to the original triple \eqref{eq:structure} if
\begin{itemize}
\item \textbf{a variant of substantiality:} $\log \# G \gep \C^{\cal O}(x,y,z)$,
\item \textbf{a variant of uniformity:}
the cardinalities of sections and projections of $G$ are not greater than the exponent of the corresponding complexity term in  $\bar q$
(similarly to the property of uniformity of $F'\subset \text{Clone}^{\cal O}(x,y,z)$ as formulated above).
\end{itemize}
Obviously, triples of sets satisfying the definition of  similarity exist, e.g.,  the original set $F'$ with its positive and negative certificates is similar to itself.
Let $(G^0, S^0_+, S^0_-)$ be the very first (e.g., in the lexicographical order) triple of sets respecting the presented requirements.  

Although we cannot enumerate the elements of $F'$ without access to the oracle $\cal O$, the sets $(G^0, S^0_+, S^0_-)$ can be found algorithmicaly given only the protocol $\pi$ and the components of the extended vector $\bar q$.
Observe that for most triples $(x',y',z')\in G_0$, the extended complexity profile (non-relativized one) is $O(\log n)$-close to $\bar q$. 
Condition~(b) above implies that for each $w$ that is $G^0$-compatible with $(r,t)$ we have $\C(w\mid r,t)\lep \ell$ (once again, this Kolmogorov complexity term is not relativized).

Thus, most $(x',y',z')\in G_0$ have the required (non-relativized) complexity profile, and  by the construction of triples similar to \eqref{eq:structure}, our protocol $\pi$ fails on these triples  with a probability $>1/2$. 
This observation concludes the proof.
\end{proof}
\begin{remark}
If a communication protocol leaks minor information on the key to the adversary,
\[
\C(\text{key} \mid \text{transcript},\text{public randomness})  = |\text{key}| - \delta,
\]
 we can improve the secrecy by taking a random hash $\text{key}' = \text{hash} (\text{key},\text{public random bits})$ such that 
the new $\text{key}'$ is $\delta+O(1)$ shorter than original $\text{key}$ but
\[
\C(\text{key}' \mid \text{transcript},\text{public randomness})  = |\text{key}'| - O(1),
\]
see \cite{jacm2019}. Thus, if  we can agree on a mildly secure key of an asymptotically optimal size, 
we can also agree on a strongly secure key of  approximately the same size.
\end{remark}

\begin{remark}
Lemma~\ref{l:protocol-relativization} can be understood in the counter-positive way: 
if Alice, Bob, and Charlie can efficiently agree on a secret key for triples of inputs $(x,y,z)$ with some specific complexity profile (in the sense of the standard non-relativized Kolmogorov complexity),
they can do the same in the sense of Kolmogorov complexity relativized conditional on oracle $\cal O$.
In other words, relativization changes the set of inputs with a specified complexity profile but it does not make the problem of secret key agreement more difficult.
\end{remark}

\section{Proof of Lemma~\ref{l:reduction}}\label{sec:reduction}
Let us recall Muchnik's theorem on conditional descriptions and its version proven  by Bauwens and Zimand.
\begin{theorem}\label{th:muchnik}
(a) \cite{muchnik-coding}
For every string $a$ and for all strings $b_1,\ldots,b_\ell$ and for every number $m\le \C(a)$ 
there exists a ``digital fingerprint'' of $a$ of length $m$ that looks maximally random conditional on each $b_j$. 
Technically, this means that for some $\tilde a$ we have
\[
\C(\tilde a) \eqp m,\ \C(\tilde a \mid a) \eqp 0, \ \text{and} \ \C( \tilde a\mid b_j) = \min\{ \C( a\mid b_j), m\} \ \text{for}\ j=1,\ldots,\ell. 
\]
(b) \cite{bauwens-coding,zimand-coding} (see also the \emph{Single Source Compression Theorem} in \cite{jacm2019}) Moreover, such a ``fingerprint''  can be constructed pretty explicitly: given the length of string $a$, the numbers $\C(a\mid b_j)$, and $m$, one can construct an algorithm $\text{\rm Code}$ such that the conditions from (a) are valid for the vast majority of strings $\tilde a = \text{\rm Code}(a,r)$, where the probability is taken over the choice of a string $r$ of length $O(\log (|a|+|b_1|+\ldots+|b_\ell |))$.
\end{theorem}
\begin{proof}[Proof of Lemma~\ref{l:reduction}]

\noindent
\textbf{Proof of~(a).} We apply Theorem~\ref{th:muchnik} for $\ell=1$, with $a=x$  and $b_1=\langle y,z\rangle$,  and $m = \delta$ and obtain a string $\tilde x$ such that 
\[
\C(\tilde x) \eqp \delta,\ \C(\tilde x \mid x) \eqp 0, \  \C(\tilde x \mid y,z) \eqp \delta.
\]
In a similar way (applying again Theorem~\ref{th:muchnik}) we obtain $\tilde y$ and $\tilde z$ such that 
\[
\C(\tilde y) \eqp \delta,\ \C(\tilde y \mid y) \eqp 0, \  \C(\tilde y \mid x,z) \eqp \delta
\text{ and }
\C(\tilde z) \eqp \delta,\ \C(\tilde z \mid z) \eqp 0, \  \C(\tilde z \mid x,y) \eqp \delta.
\]
A routine check shows that the triple $(x,y,z)$ \emph{conditional on}  $\langle \tilde x, \tilde y, \tilde z\rangle $ has complexity profile \eqref{eq:reduction-a}. 

Let $\pi'$ be a communication protocol for the profile \eqref{eq:reduction-a}.
Given $(x,y,z)$, Alice, Bob, and Charlie can do as follows: each of them computes a fingerprint $\tilde x, \tilde y, \tilde z$ for $x$, $y$, and $z$ respectively, and  broadcast them. 
Then they proceed with a protocol $\pi'$ applied to $(x,y,z)$ and the complexity profile \eqref{eq:reduction-a} (for Kolmogorov complexity relativized conditional on $\tilde x, \tilde y, \tilde z$). 
If the protocol succeeds, they obtain a secret key $w$ that is incompressible given the public random bits and the full transcript of the combined protocol, which consists of the transcript of $\pi'$, the public random bits, \emph{and the broadcasted strings $\tilde x, \tilde y, \tilde z$}.
Thus, we obtain a communication protocol for the original $(x,y,z)$, whose communication complexity is equal to the communication complexity of $\pi'$ increased by $3\delta$ bits  (the total length of $\tilde x, \tilde y, \tilde z$ broadcasted at the first stage).

We know the optimal communication complexity of secret key agreement for the original $(x,y,z)$  due to Theorem~\ref{th:main-special-case}. 
Therefore, communication complexity of $\pi'$ (for the relativized complexity profile) cannot be better than \eqref{eq-cc-3}, 
which is in our case $3\alpha +\frac32\beta-3\delta$ (the optimal communication complexity for $(x,y,z)$ decreased by $3\delta$).
It remains to use Lemma~\ref{l:protocol-relativization} and conclude that the communication complexity of secret key agreement for  $(x',y',z')$ having the non-relativized complexity profile  \eqref{eq:reduction-a} cannot be better than  $3\alpha +\frac32\beta - 3\delta$.

\medskip

\noindent
\textbf{Proof of~(b).} Let $v$  be a string of $\delta$ bits  such that $I(x,y,z:v)\eqp 0$, and 
\begin{equation} \label{eq:common-block}
x' = \langle x,v\rangle,\ y' = \langle y,v\rangle, \ z' = \langle z,v\rangle.
\end{equation}
It is easy to verify that complexity profile of $(x',y',z')$ matches \eqref{eq:reduction-b}.
From Theorem~\ref{th:algorithmic-3} it follows that the optimal size of a secret key that three parties can agree on when given $x',y',z'$ as inputs is
\begin{equation} \label{eq:longer-key}
\begin{array}{l}
\frac12\left(I(x':y'\mid z') +  I(x':z'\mid y') + I(y':z'\mid z')  \right) + I(x':y':z')   \\
   \eqp \frac32\beta+\gamma+\delta  
  \eqp \frac12\left(I(x:y\mid z) +  I(x:z\mid y) + I(y:z\mid z)  \right) + I(x:y:z) + \delta.
\end{array}
\end{equation}
(i.e., the  size of the key for the triple of inputs $(x,y,z)$ plus $\delta$).
Our aim is to show that such a protocol requires communication complexity at least 
 \[
\C(x',y',z') - \frac12\big(I(x':y'\mid z') +  I(x':z'\mid y') + I(y':z'\mid z')  \big)  - I(x':y':z') = 3 \alpha + \frac32 \beta.
 \]
 Assume for the sake of contradiction that there is protocol $\pi$ that achieves the goal with communication complexity $3 \alpha + \frac32 \beta - \epsilon$.
 In what follows we construct a protocol $\pi'$ that allows to construct an optimal size secret key with the same communication complexity for the original inputs $(x,y,z)$.

In the new protocol Alice, Bob, and Charlie take from the (common) public source of random bits a string of $\delta$ bits $v$  and define $x',y',z'$ as in \eqref{eq:common-block}.
With an overwhelming probability we have $I(x,y,z:v)\eqp 0$, so we have \eqref{eq:reduction-b}. Then Alice, Bob, and Charlie proceed as in protocol $\pi$ and find a common key $w$ of size \eqref{eq:longer-key}.
As $\pi$ is a valid protocol of secret key agreement, the key $w$ has zero mutual information with the transcript $t$ of the protocol. 
However, we loose the conditional  of secrecy when $v$ is public (which is in our case a part of the public source of random bits accessible to the attacker).
However, we may restore the secrecy by reducing the size of the key.
We apply Theorem~\ref{th:muchnik} and construct $w' = \text{Code}(w,r')$ (where $r'$ is a string of public random bits of logarithmic size) such that
\[
\C(w') = \frac32\beta+\gamma, \ \Inf(w' : v, t) \eqp0
\]
with a high probability (over the choice of $r'$).
The produced $w'$ can be taken as a secret key. The new protocol has communication complexity $3 \alpha + \frac32 \beta - \epsilon$ (the same as $\pi$),
and we get a contradiction with Theorem~\ref{th:main-special-case} unless $\epsilon\eqp0$.

\medskip

\noindent
\textbf{Proof of~(c).}
Let $(x,y,z)$ be a hyperedge of tri-expander, as in the proof of Theorem~\ref{th:main-special-case}.
We will transform this triple in a different triple of inputs $(x',y',z')$ using the following trick suggested by Alexander Shen,~\cite{personal-communication}.
We apply Theorem~\ref{th:muchnik} with $\ell = 3$ for $a=x$, $b_1 = y$, $b_2 = z$, $b_3 = \langle y,z\rangle$ and $m = \C(x) - \delta$, and obtain an $x'$ such that
\[
\C(x') = \C(x) - \delta, \ \C(x' \mid x) \eqp 0,\ \C(x'\mid y) \eqp \C(x\mid y),\ \C(x'\mid z) \eqp \C(x\mid z),\ C(x'\mid y,z) \eqp \C(x\mid y,z).
\]
In a similar way, we obtain $y'$ and $z'$ such that 
\[
\C(y') = \C(y) - \delta, \ \C(y' \mid y) \eqp 0,\ \C(y'\mid x) \eqp \C(y\mid x),\ \C(y'\mid z) \eqp \C(y\mid z),\ C(y'\mid x,z) \eqp \C(y\mid x,z)
\]
and
\[
\C(z') = \C(z) - \delta, \ \C(z' \mid z) \eqp 0,\ \C(z'\mid x) \eqp \C(z\mid x),\ \C(z'\mid y) \eqp \C(z\mid y),\ C(z'\mid x,y) \eqp \C(z\mid x,y).
\]
It is not hard to verify that the triple $(x',y',z')$ has complexity  profile \eqref{eq:reduction-c}, see Fig.~\ref{fig:reduced-profiles}.
			\begin{figure}[h]

    			 \begin{subfigure}[b]{0.49\textwidth}
			 	\begin{flushleft}
				\begin{tikzpicture}[scale=0.6]
				  \draw \firstcircle node[above left] {\small $\alpha$};
				  \draw \secondcircle node [above right] {\small $\alpha$};
				  \draw \thirdcircle node [below] {\small $\alpha$};
				  \node at (95:0.05)   {\small $0$};
				  \node at (90:1.55) {\small $ \beta$};
				  \node at (210:1.75) {\small $\beta$};
				  \node at (330:1.75) {\small $\beta$};
				  \node at (150:4.75) {\Large $x$};
				  \node at (30:4.75) {\Large $y$};
				  \node at (270:4.75) {\Large $z$};
				\end{tikzpicture}
				\caption{$\C(x\mid y,z)\eqp  \C(y\mid x,z) \eqp \C(z\mid x,y) \eqp \alpha$, \newline 
				$\Inf(x:y\mid z) \eqp \Inf(x:z\mid y) \eqp \Inf(y:z\mid x) \eqp \beta$, \newline  $\Inf(x:y:z) \eqp0$.}
				\end{flushleft}
			\end{subfigure}
			\hspace{2em}	
			 \begin{subfigure}[b]{0.49\textwidth}
			 	\begin{flushleft}
				\begin{tikzpicture}[scale=0.6]
				  \draw \firstcircle node[above left] {\small $\alpha$};
				  \draw \secondcircle node [above right] {\small $\alpha$};
				  \draw \thirdcircle node [below] {\small $\alpha$};
				  \node at (95:0.05)   {\small $-\delta$};
				  \node at (90:1.55) {\small $ \beta$};
				  \node at (210:1.75) {\small $\beta$};
				  \node at (330:1.75) {\small $\beta+\delta$};
				  \node at (150:4.75) {\Large $x'$};
				  \node at (30:4.75) {\Large $y$};
				  \node at (270:4.75) {\Large $z$};
				\end{tikzpicture}
				\caption{$\C(x' | y,z)\eqp  \C(y | x',z) \eqp \C(z | x',y) \eqp \alpha$, \newline 
				$\Inf(x':y | z) \eqp \Inf(x':z | y) \eqp \beta$, $ \Inf(y:z | x') \eqp \beta+\delta $, \newline  $\Inf(x':y:z) \eqp -\delta$.}
				\end{flushleft}
			\end{subfigure}

	\vspace{2em}
	
	    			 \begin{subfigure}[b]{0.49\textwidth}
			 	\begin{flushleft}
				\begin{tikzpicture}[scale=0.6]
				  \draw \firstcircle node[above left] {\small $\alpha$};
				  \draw \secondcircle node [above right] {\small $\alpha$};
				  \draw \thirdcircle node [below] {\small $\alpha$};
				  \node at (95:0.05)   {\small $-2\delta$};
				  \node at (90:1.55) {\small $ \beta$};
				  \node at (210:1.75) {\small $\beta +\delta $};
				  \node at (330:1.75) {\small $\beta+\delta $};
				  \node at (150:4.75) {\Large $x'$};
				  \node at (30:4.75) {\Large $y'$};
				  \node at (270:4.75) {\Large $z$};
				\end{tikzpicture}
				\caption{$\C(x'| y',z)\eqp  \C(y'| x',z) \eqp \C(z | x',y') \eqp \alpha$, \newline 
				$\Inf(x':y' | z) \eqp \beta $, $  \Inf(x':z| y') \eqp \Inf(y':z| x') \eqp \beta+\delta$, \newline  $\Inf(x':y':z) \eqp -2 \delta$.}
				\end{flushleft}
			\end{subfigure}
			\hspace{2em}	
			 \begin{subfigure}[b]{0.49\textwidth}
			 	\begin{flushleft}
				\begin{tikzpicture}[scale=0.6]
				  \draw \firstcircle node[above left] {\small $\alpha$};
				  \draw \secondcircle node [above right] {\small $\alpha$};
				  \draw \thirdcircle node [below] {\small $\alpha$};
				  \node at (95:0.05)   {\small $-3\delta$};
				  \node at (90:1.55) {\small $ \beta+\delta$};
				  \node at (210:1.75) {\small $\beta+\delta$};
				  \node at (330:1.75) {\small $\beta+\delta$};
				  \node at (150:4.75) {\Large $x'$};
				  \node at (30:4.75) {\Large $y'$};
				  \node at (270:4.75) {\Large $z'$};
				\end{tikzpicture}
				\caption{$\C(x' | y',z')\eqp  \C(y' | x',z') \eqp \C(z'| x',y') \eqp \alpha$, \newline 
				$\Inf(x':y' | z') \eqp \Inf(x':z'| y') \eqp \Inf(y':z'| x') \eqp \beta+\delta $, \newline  $\Inf(x':y':z') \eqp -3\delta$.}
				\end{flushleft}
			\end{subfigure}
			\caption{Complexity profile for Muchnik's fingerprints of $x,y,z$.}
				\label{fig:reduced-profiles}
			\end{figure}

From Theorem~\ref{th:algorithmic-3} it follows that the optimal size of a secret key that three parties can agree on when given $x',y',z'$ as inputs is
\[
\frac12\left(I(x':y'\mid z') +  I(x':z'\mid y') + I(y':z'\mid z')  \right) + I(x':y':z') \eqp \frac{3 \beta}2 - \frac{3\delta}2 .
\]
This means  (see Corollary~\ref{col:1}) that Charlie must send a message $m_C$ such that  
\[
\Inf(x':y' \mid m_C) \gep \frac{3 \beta}2 - \frac{3\delta}2 .
\]
Observe that $\Inf(x':y') \eqp \beta - 2\delta$.  
Therefore, the mutual information between $m_C$ and $(x',y')$ must be greater than the difference between $\Inf(x':y' \mid m_C)$ and $\Inf(x':y')$,
\[
 \Inf(m_C : x',y') \gep \left( \frac{3 \beta}2 - \frac{3\delta}2 \right)- \left(\beta - 2\delta\right) \eqp \frac\beta2 + \frac\delta2
\]
(as in the proof of Theorem~\ref{th:main-special-case}). If $(x',y',z')$ are obtained from a hyperedge of a tri-expander, then we can apply Corollary~\ref{thm:1-corollary} and conclude that 
\[
\C(m_C) \gep \C(z' \mid x'y') +   \frac\beta2 + \frac\delta2 = \alpha +  \frac\beta2 + \frac\delta2 .
\]
A similar argument gives
\[
\C(m_A) \ge  \alpha +  \frac\beta2 + \frac\delta2  \text{ and }
\C(m_B) \ge \alpha +  \frac\beta2 + \frac\delta2 
\]
for the messages sent by Alice and Bob respectively. By summing up these bounds we conclude that the total length of all messages must be at least
\[
 3\alpha +  \frac{3 \beta}2 + \frac{3\delta}2, 
\]
which is exactly the communication complexity of the omniscience protocol.
\end{proof}

\end{document}